\def\arXiv{}
\newcommand{\cmark}{\ding{51}}%
\newcommand{\xmark}{\ding{55}}%
\newcommand{\yes}{\cmark}
\newcommand{\no}{\xmark}
\renewcommand{\todo}[1]{}
\newcommand{\ii}[1]{{\normalfont \ensuremath{#1^{\tiny\circled{i}}}}}%
\renewcommand{\k}[1]{{\normalfont \ensuremath{#1^{\tiny\circled{k}}}}}%
\newcommand{\jj}[1]{{\normalfont \ensuremath{#1^{\tiny\circled{j}}}}}%
\renewcommand{\ii}[1]{{\normalfont \ensuremath{#1_{i}}}}%
\renewcommand{\k}[1]{{\normalfont \ensuremath{#1_{k}}}}%
\renewcommand{\jj}[1]{{\normalfont \ensuremath{#1_{j}}}}%
\newenvironment{sketch}{\par\vspace{1pt} %
  \noindent{\textit{Proof Sketch.}} %
  \hspace*{0em}} %
{\nopagebreak\hfill$\Box$\par\vspace{6pt}}
\title{\Large Simultaneous Embedding of Planar Graphs\thanks{Submitted as a
    chapter about simultaneous embedding to the GD Handbook edited by
    Roberto Tamassia.}}
\author{Thomas Bläsius$^{1)}$, Stephen G.~Kobourov$^{2)}$, Ignaz
  Rutter$^{1)}$}
\date{$^{1)}$ Karlsruhe Institute of Technology (KIT) \\ %
  \texttt{\{blaesius,rutter\}@kit.edu}\\
  \medskip
  $^{2)}$ University of Arizona \\ %
  \texttt{kobourov@cs.arizona.edu}}
\begin{document}

\maketitle


\begin{abstract}
  Simultaneous embedding is concerned with simultaneously representing
  a series of graphs sharing some or all vertices.  This forms the
  basis for the visualization of dynamic graphs and thus is an
  important field of research.  Recently there has been a great deal
  of work investigating simultaneous embedding problems both from a
  theoretical and a practical point of view.  We survey recent work on
  this topic.
\end{abstract}

\section{Introduction}
\label{sec:introduction}

Traditional problems in graph drawing involve the layout of a single
graph, whereas in simultaneous graph drawing we are concerned with the
layout of multiple related graphs. In particular, consider the problem
of drawing a series of graphs that share all, or parts of the same
vertex set. The graphs may represent different relations between the
same set of objects, or alternatively, the graphs may be the result of
a single relation that changes through time.  

In this chapter we survey efforts to address the following problem:
Given a series of graphs that share all, or parts of the same vertex
set, what is a natural way to layout and display them? The layout and
display of the graphs are different aspects of the problem, but also
closely related, as a particular layout algorithm is likely to be
matched best with a specific visualization technique. As stated above,
however, the problem is too general and it is unlikely that one
particular layout algorithm will be best for all possible
scenarios. Consider the case where we only have a pair of graphs in
the series, and the case where we have hundreds of related graphs. The
``best'' way to layout and display the two series is likely going to
be different. Similarly, if the graphs in the sequence are very
closely related or not related at all, different layout and display
techniques may be more appropriate.

For the layout of the graphs, there are two important criteria to
consider: the {\em readability} of the individual layouts and the {\em
  mental map preservation} in the series of drawings.  The readability
of individual drawings depends on aesthetic criteria such as display
of symmetries, uniform edge lengths, and minimal number of
crossings. Preservation of the mental map can be achieved by ensuring
that vertices that appear in consecutive graphs in the series, remain
in the same positions. These two criteria are often contradictory. If
we individually layout each graph, without regard to other graphs in
the series, we may optimize readability at the expense of mental map
preservation. Conversely, if we fix the vertex positions in all
graphs, we are optimizing the mental map preservation but the
individual layouts may be far from readable. In simultaneous graph
embedding, vertices are placed in the exact same locations in all the
graphs, while the layout of the edges may differ.

Visualization of related graphs, that is, graphs that are defined on
the same set of vertices, arise in many different settings. Software
engineering, databases, and social network analysis, are all examples
of areas where multiple relationships on the same set of objects are
often studied. In evolutionary biology, phylogenetic trees are used to
visualize the ancestral relationship among groups of
species. Depending on the assumptions made, different algorithms
produce different phylogenetic trees.  Comparing the outputs and
determining the most likely evolutionary hypothesis can be difficult
if the drawings of the trees are laid out independently of each other.

While in some of the above examples the graphs are not necessarily
planar, solving the planar case can provide intuition and ideas for
the more general case. With this in mind, here we concentrate on the
problem of simultaneous embedding of planar graphs.  Simultaneous
embedding of planar graphs generalizes the notion of traditional graph
planarity and is motivated by its relationship with problems of graph
thickness, geometric thickness, and applications such as the
visualization of graphs that evolve through time.

The thickness of a graph is the minimum number of planar subgraphs
into which the edges of the graph can be partitioned;
see~\cite{mutzelthickness98} for a survey. Thickness is an important
concept in VLSI design, since a graph of thickness $k$ can be embedded
in $k$ {\em layers}, with any two edges drawn in the same layer
intersecting only at a common vertex and vertices placed in the same
location in all layers.  A related graph property is {\em geometric
  thickness}, defined to be the minimum number of layers for which a
drawing of $G$ exists having all edges drawn as straight-line
segments~\cite{deh-gtcg}. Finally, the {\em book thickness} of a graph
$G$ is the minimum number of layers for which a drawing of $G$ exists,
in which edges are drawn as straight-line segments and vertices are in
convex position~\cite{bk-btg-79}. It has been shown that the book
thickness of planar graphs is no greater than
four~\cite{JCSS::Yannakakis1989}.

\subsection{Problem Definitions}

This chapter is structured along three basic simultaneous embedding
results for planar graphs, {\sc Simultaneous Geometric Embedding}
({\sc SGE}), {\sc Simultaneous Embedding with Fixed Edges} ({\sc
  SEFE}), and {\sc Simultaneous Embedding} ({\sc SE}), \todo{new
  figure}Figure~\ref{fig:sge-se-sefe} illustrates the three cases.
For all three problems the input always consists of two planar graphs
$\1G = (\1V, \1E)$ and $\2G = (\2V, \2E)$ sharing a common subgraph $G
= (V, E) = (\1V \cap \2V, \1E \cap \2E)$.  

The most strict variant is {\sc Simultaneous Geometric Embedding}
({\sc SGE}), which asks for planar straight-line drawings of~$\1G$
and~$\2G$ such that common vertices have the same coordinates in both
drawings.  The requirements of SGE are very strict, and as we will see
in Section~\ref{sec:sge} there exist a lot of examples that do not
admit such an embedding.  While the problem {\sc Simultaneous
  Embedding with Fixed Edges} still requires common vertices to have
the same coordinates, it relaxes the straight-line requirement by
allowing arbitrary curves for representing edges.  To maintain the
mental map, common edges are still required to be represented by the
same curves.  Finally, {\sc Simultaneous Embedding} drops the
constraints on the curves altogether and just requires common vertices
to have the same coordinates.

For all these problems it is common to also use the problem name to
denote a corresponding embedding, that is we also say that $\1G$ and
$\2G$ have an {\sc SGE}, {\sc SEFE} or {\sc SE} if they admit
solutions to these problems.  Moreover, all these problems readily
generalize to~$k > 2$ input graphs~$\1G,\dots,\k G$, by requiring that
the conditions hold for each pair of graphs.  In this case a common
restriction is to require that all input graphs share exactly the same
graph~$G$, that is~$G = \ii G \cap \jj G$ for~$i \ne j$.  We call this
behavior \emph{sunflower intersection}.

\begin{figure}
  \centering
  \includegraphics{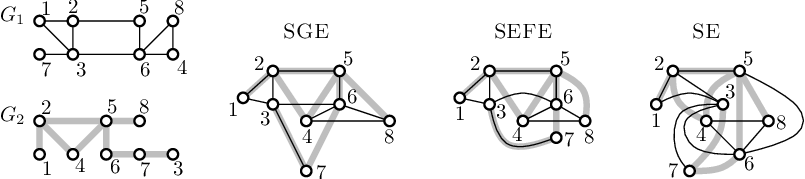}
  \caption{Two graphs $\1G$ and $\2G$ together with an {\sc SGE}, a
    {\sc SEFE} and an {\sc SE}.  In the {\sc SGE} all edges are
    straight line segments while some edges in the {\sc SEFE} are not.
    The {\sc SE} contains common edges ($\{3, 7\}$ and $\{5, 6\}$)
    that are drawn differently with respect to $\1G$ and~$\2G$.  
    \label{fig:sge-se-sefe}}
\end{figure}

We note that simultaneous embedding problems are closely related to
constrained embedding problems.  For example if the planar embedding
of one of the two graphs of an instance of {\sc SEFE} is already
fixed, the problem of finding a {\sc SEFE} is equivalent to finding an
embedding of the second graph respecting a prescribed embedding for a
subgraph, namely the common graph.  This constrained embedding problem
is known as {\sc Partially Embedded Planarity}.  Angelini et
al.~\cite{abf-10} show that this problem can be solved in linear time
and, in the spirit of Kuratowski's theorem, Jel\'inek et
al.~\cite{jkr-kttppeg-11} characterize the yes-instances by forbidden
substructures.  A similar tie to constrained embedding problems exists
in the case of {\sc SE}.  After fixing the drawing of one of the two
input graphs it remains to draw a single graph without crossings at
prescribed vertex positions.  This problem is known as {\sc Point Set
  Embedding} and Pach and Wenger show that this is always
possible~\cite{pw-epgfvl-98}.  There are other, less obvious relations
between simultaneous embedding and constrained embedding problems,
which will be described later.

\subsection{Overview and Outline}

This chapter starts with the three simultaneous embedding problems
{\sc SGE}, {\sc SEFE}, and {\sc SE}, and we discuss each of them in
one of the following sections.  There are three major classes of
results on simultaneous embedding problems.  The first class contains
algorithms that, for given graphs with certain properties, always
produce a simultaneous embedding, perhaps with additional quality
guarantees.  These results show the existence of simultaneous
embeddings for the corresponding graph classes.  The second class
contains counterexamples that do not admit a simultaneous embedding.
The third class contains algorithms and complexity results for the
problem of testing whether a given instance admits a simultaneous
embedding.

We present a survey of the results on {\sc SGE} in
Section~\ref{sec:sge}.  Due to the strong requirements of {\sc SGE}
results of the first type, which identify classes of graphs that
always admit a simultaneous embedding, exist only for very few and
strongly restricted graph classes.  For example, even a path and a
tree of depth~4 may not have an {\sc SGE}~\cite{kaufmann-tp-12}.
Moreover, it is NP-hard to decide {\sc SGE} and there are no further
results of the third type, that is algorithms testing whether an
instance has an {\sc SGE} or not, even for restricted instances.

Section~\ref{sec:sefe} presents the {\sc SEFE} problem, which turns
out to be much less restrictive than {\sc SGE}.  For example a tree
and a path do always admit a {\sc SEFE} although they do not have an
{\sc SGE}~\cite{fab-sefe-06}.  On the other hand, examples not having
a {\sc SEFE} are also counterexamples for {\sc SGE}.  Moreover, it is
still open whether {\sc SEFE} can be tested in polynomial time for two
graphs, whereas it is NP-complete for three or more
graphs~\cite{juenger-sefe-06}.  However, for two graphs, there exist
several results of the third type, that is testing algorithms, for
restricted inputs.  For example, it is possible to decide in linear time
whether a pair of graphs admits a {\sc SEFE} or not, if the common
graph is biconnected~\cite{adfpr-tsegi-12, lubiw-testing-10}.

In Section~\ref{sec:se}, we consider the least restrictive
simultaneous embedding problem, {\sc SE}, which only requires common
vertices to have the same coordinates in all drawings.  As every
planar graph can be drawn without crossings even if the position of
every vertex is fixed~\cite{pw-epgfvl-98}, there are no
counterexamples for {\sc SE} and it is not necessary to have a testing
algorithm.  The results on {\sc SE} focus on creating simultaneous
embeddings such that edges have few bends and the resulting drawings
use small area.

Sections~\ref{sec:color-se}--\ref{sec:practical-approaches} presents
several variants of approaches to simultaneous embedding that do not
quite fall into the categories of the three main problems.  The
problem variants discussed in Section~\ref{sec:color-se} relax the
requirement of having a fixed mapping between the vertices of~$\1G$
and~$\2G$.  They rather ask whether a suitable mapping can be found
such that a {\sc SEFE} exists~\cite{se-original-07}.  Colored {\sc
  SGEs} are somewhere between and allow the mapping to identify only
vertices having the same color~\cite{colored-se-11}.
Section~\ref{sec:matched-drawings} deals with matched drawings
requiring straight-line drawings of the two input graph such that each
common vertex has only the same $y$-coordinate in both drawings.
Other work, discussed in Section~\ref{sec:other-simult-repr}, deals
with the problem of simultaneously representing a planar graph and its
dual~\cite{t-hdg-63} and considers different types of simultaneous
representations, such as simultaneous intersection representations, as
introduced by Jampani and Lubiw~\cite{lubiw-chordal-09}.
Section~\ref{sec:practical-approaches} presents several practical
approaches to simultaneous embedding problems.

\todo{new paragraph added}In Section~\ref{sec:morphing} results on
morphing between different planar drawings of the same graph are
presented.  A morph aims to preserve the mental map between different
drawings of the same graph, which can be seen as the opposite to
drawing different graphs such that the common part is drawn the same.
Finally, in Section~\ref{sec:open-problems}, we present a list of open
questions.  The list
contains questions that have been open for several years, as
well as questions that are motivated by recent research results.

\section{Simultaneous Geometric Embedding}
\label{sec:sge}

In this section we consider the most desirable (and most restrictive)
kind of simultaneous drawings, the {\sc SGE}s.  Most results on that
problem are summarized in \todo{table added}Table~\ref{tab:sge}.
Figure~\ref{fig:sge} illustrates the relation between these results.
Before we describe the results in more detail we start with a small
example.  While it may be tempting to say that if the union of two
graphs contains a subdivision of $K_5$ or $K_{3,3}$ then the two
graphs have no simultaneous geometric embedding, this is not the case;
see Figure~\ref{fig:union}. In fact, while planarity testing for a
single graph can be done in linear time~\cite{ht-ept-74},
Estrella-Balderrama et al.~\cite{juenger-np-07} show that the decision
problem {\sc SGE} is NP-hard.  Other results concerning the complexity
of {\sc SGE} (for example for restricted graph classes) are not known.

In the following we describe the results illustrated in
Figure~\ref{fig:sge}.  We start with algorithms always creating an
{\sc SGE} when the input is restricted to special graph
classes.  We then continue with graph classes containing
counterexamples.  Finally, we consider the results not fitting in one
of these two cases.

\begin{table}[tb]
  \newcommand{\row}{\rule{0pt}{2.5ex}}
  \centering
  \begin{tabular}{lccc}
    \toprule 
    \row 
    {\bf SGE Instance}                           & {\bf Existence} & {\bf Area}                             & {\bf Ref.}                           \\
    \midrule
    $\1G$ \& $\2G$ paths                         & \yes            & $n \times n$                           & {\cite{se-original-07}}              \\
    $\1G$ path \& $\2G$ extended star            & \yes            & $O(n^2) \times O(n)$                   & {\cite{se-original-07}}              \\
    $\1G$ caterpillar \& $\2G$  path             & \yes            & $n \times 2n$                          & {\cite{se-original-07}}              \\
    $\1G$ \& $\2G$ caterpillar                   & \yes            & $3n \times 3n$                         & {\cite{se-original-07}}              \\
    $2$ stars                                    & \yes            & $3 \times (n-2)$                       & {\cite{se-original-07}}              \\
    $k$ stars                                    & \yes            & $O(c^k\sqrt{n}) \times O(c^k\sqrt{n})$ & {\cite{se-original-07}}              \\
    $\1G$ \& $\2G$ cycles                        & \yes            & $4n \times 4n$                         & {\cite{se-original-07}}              \\
    $\1G$ \& $\2G$ have maximum degree 2         & \yes            & ---                                    & {\cite{dek-gtldg-04}}                \\
    $\1G$ wheel \& $\2G$ cycle                   & \yes            & ---                                    & {\cite{beppe-matched-11}}            \\
    $\1G$ tree \& $\2G$ matching                 & \yes            & ---                                    & {\cite{beppe-matched-11}}            \\
    $\1G$ outerpath \& $\2G$ matching            & \yes            & ---                                    & {\cite{beppe-matched-11}}            \\
    $\1G$ tree of depth 2 \& $\2G$ path          & \yes            & ---                                    & {\cite{kaufmann-tp-12}}              \\
    $\1G$ level-planar w.r.t. path $\2G$         & \yes            & ---                                    & {\cite{se-arcs-09}}                  \\
    \midrule
    $\1G$ \& $\2G$ planar                        & \no             & ---                                    & {\cite{se-original-07}}              \\
    $\1G$ path \& $\2G$ planar                   & \no             & ---                                    & {\cite{se-original-07,ek-sepgfb-05}} \\
    $\1G$ path \& $\2G$ edge disjoint            & \no             & ---                                    & {\cite{se-constrained-09}}           \\
    three paths                                  & \no             & ---                                    & {\cite{se-original-07}}              \\
    $\1G$ matching \& $\2G$ planar               & \no             & ---                                    & {\cite{beppe-matched-11}}            \\
    six matchings                                & \no             & ---                                    & {\cite{beppe-matched-11}}            \\
    $\1G$ \& $\2G$ outerplanar                   & \no             & ---                                    & {\cite{se-original-07}}              \\
    $\1G$ \& $\2G$ trees                         & \no             & ---                                    & {\cite{kaufmann-trees-09}}           \\
    $\1G$ depth-4 tree \& $\2G$ edge disj.\ path & \no             & ---                                    & {\cite{kaufmann-tp-12}}              \\
    \bottomrule
  \end{tabular}
  \caption{A list of classes of graphs that are either known to always have
    an {\sc SGE}  or that contain counterexamples.  For the positive cases, the area consumption is given, provided that it is known.}
  \label{tab:sge}
\end{table}

\begin{figure}[hbt]
  \centering
  \includegraphics{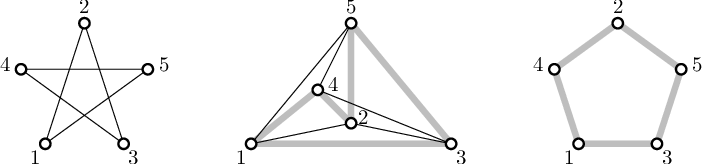}
  \caption{The union of the graph on the left and the graph on the
    right is a $K_5$, but the middle drawing shows a simultaneous
    geometric embedding of the two graphs.
  \label{fig:union}
}
\end{figure}

\begin{figure}[t]
  \centering
    \begin{tikzpicture}[overlay]
      \node at (-7.75*16pt, -1pt-0*16pt) [anchor=north]
      {\cite{se-original-07}}; 
      \node at (0*16pt, -1pt-0*16pt) [anchor=north]
      {\cite{se-original-07}}; 
      \node at (8*16pt, -1pt-0*16pt) [anchor=north]
      {\cite{se-original-07}}; 
      \node at (-7.75*16pt, -1pt-4*16pt) [anchor=north]
      {\cite{se-original-07}}; 
      \node at (-0.75*16pt, -1pt-4*16pt) [anchor=north]
      {\cite{se-original-07}}; 
      \node at (5*16pt, -1pt-4*16pt) [anchor=north]
      {\cite{se-original-07}}; 
      \node at (9.25*16pt, -1pt-4*16pt) [anchor=north]
      {\cite{se-original-07}}; 
      \node at (-7*16pt, -1pt-8*16pt) [anchor=north]
      {\cite{beppe-matched-11}}; 
      \node at (0*16pt, -1pt-8*16pt) [anchor=north]
      {\cite{dek-gtldg-04}}; 
      \node at (-5*16pt, -1pt-11*16pt) [anchor=north]
      {\cite{beppe-matched-11}}; 
      \node at (-6*16pt, -1pt-14*16pt) [anchor=north]
      {\cite{beppe-matched-11}}; 
      \node at (6.5*16pt, -1pt-11*16pt) [anchor=north]
      {\cite{kaufmann-tp-12}}; 
      \node at (6*16pt, -1pt-14*16pt) [anchor=north]
      {\cite{se-arcs-09}}; 
      \node at (-6*16pt, -1pt-18*16pt) [anchor=north]
      {\cite{se-constrained-09}}; 
      \node at (5*16pt, -1pt-18*16pt) [anchor=north]
      {\cite{kaufmann-tp-12}}; 
      \node at (-7*16pt, -1pt-21*16pt) [anchor=north]
      {\cite{se-original-07,ek-sepgfb-05}}; 
      \node at (-1.5*16pt, -1pt-21*16pt) [anchor=north]
      {\cite{beppe-matched-11}}; 
      \node at (3.5*16pt, -1pt-21*16pt) [anchor=north]
      {\cite{kaufmann-trees-09}}; 
      \node at (8.5*16pt, -1pt-21*16pt) [anchor=north]
      {\cite{beppe-matched-11}}; 
      \node at (-7*16pt, -1pt-24*16pt) [anchor=north]
      {\cite{se-original-07}}; 
      \node at (2.5*16pt, -1pt-24*16pt) [anchor=north]
      {\cite{se-original-07}}; 
      \node at (8.5*16pt, -1pt-24*16pt) [anchor=north]
      {\cite{se-original-07}}; 
    \end{tikzpicture}

  \includegraphics{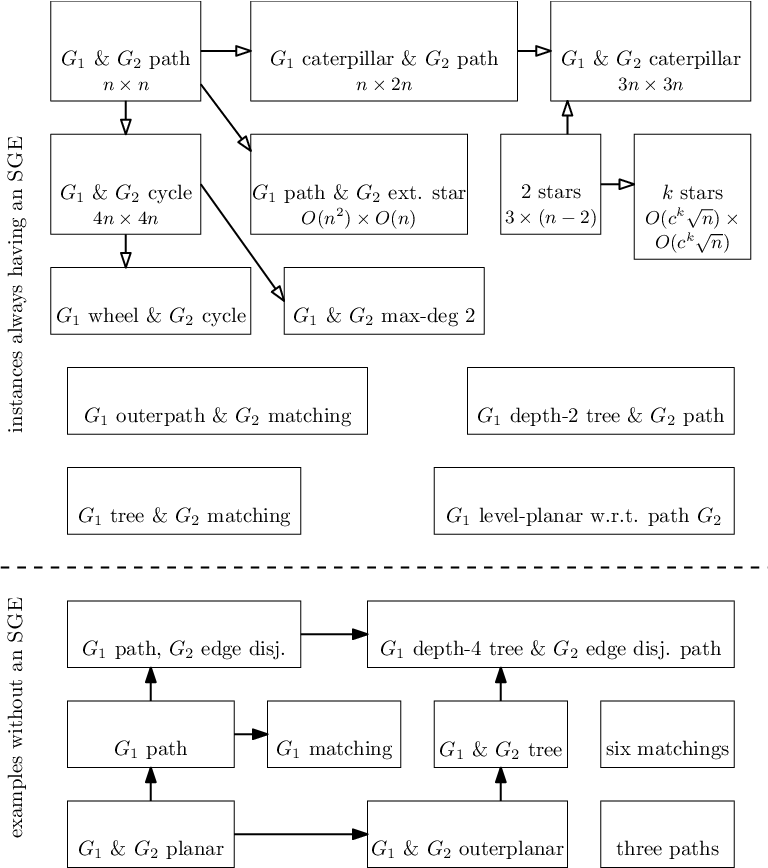}
  \caption{Overview over the so far known results on {\sc SGE}.  Each
    box represents one result and an arrow highlights that the
    source-result is extended by the target-result.  The arrowheads
    are empty for the cases in which this is only true if the grid
    size is neglected.  Note that transitive arrows are omitted.
  \label{fig:sge}
}

\bigskip
\end{figure}

\subsection{Graph Classes with SGE}

Brass et al.~\cite{se-original-07} give several algorithms for
different restricted graph classes always creating an {\sc SGE}.  In
the simplest case $\1G$ and $\2G$ are both required to be paths.  This
result is easy to prove and also provides good intuition for most of
the positive results: 

\begin{theorem}
  \label{the-two-path-theorem}
  For two paths $\1P$ and $\2P$ on the same vertex set $V$ of size $n$
  an {\sc SGE} on a grid of size $n \times n$ can be found in linear
  time.
\end{theorem}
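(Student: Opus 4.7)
The plan is to use a two-dimensional monotonicity argument. I would orient each path and let $\pi_1(v) \in \{1, \ldots, n\}$ denote the position of vertex $v$ when $\1P$ is traversed from one endpoint; define $\pi_2(v)$ analogously for $\2P$. Then place $v$ at the grid point $(\pi_1(v), \pi_2(v))$. Both $\pi_1$ and $\pi_2$ are bijections $V \to \{1, \ldots, n\}$, so all $n$ vertices land on distinct grid points in the $n \times n$ grid, and the assignment can be produced in $O(n)$ time by a single traversal of each path's adjacency list.

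For correctness, consider the straight-line drawing of $\1P$ induced by these coordinates. Each edge of $\1P$ connects two vertices whose $x$-coordinates are consecutive integers (they differ by exactly $1$), so the entire drawing of $\1P$ is a strictly $x$-monotone polyline and hence simple. The same argument applied to $\2P$ shows that its drawing is a $y$-monotone and therefore simple polyline. Since the two graphs share all vertices and these occupy identical positions in both drawings by construction, the two straight-line drawings together form the required {\sc SGE}.

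The key conceptual step is the observation that the planarity requirements for $\1P$ and $\2P$ can be decoupled by using orthogonal coordinate axes to encode the two path orderings; once this is noticed, nothing else is delicate. I do not expect any genuine obstacle: the linear running time and the $n \times n$ grid bound fall out of the construction with no further work, and there is no constraint involving common edges that would force additional care, since the common edge set of two paths on the same vertex set is simply whatever edges the two paths happen to share and is automatically drawn consistently.
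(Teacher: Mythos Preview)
Your proposal is correct and follows essentially the same argument as the paper: place each vertex at $(\pi_1(v),\pi_2(v))$ so that $\1P$ is $x$-monotone and $\2P$ is $y$-monotone, hence both are simple. The paper's proof is in fact slightly terser than yours, omitting the explicit remarks about bijections, linear time, and common edges that you include.
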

\begin{proof}
  For each vertex $u\in V$, we embed $u$ at the integer grid point
  $(\1p, \2p)$, where $\ii p\in\{1,2,\ldots,n\}$ is the vertex's
  position in the path $\ii P$, $i \in \{1,2\}$.  Then, $\1P$ is
  embedded as an $x$-monotone polygonal chain, and $\2P$ is embedded
  as a $y$-monotone chain. Thus, neither path is self-intersecting;
  see Figure~\ref{fig:twopaths} for an example.
\end{proof}

\begin{figure}[t]
  \centering
  \includegraphics{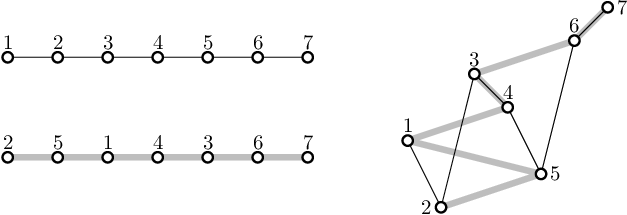}
  \caption{Two paths simultaneously embedded such that one path is
    $x$-monotone and the other is $y$-monotone.
  \label{fig:twopaths}
}
\end{figure}

Brass et al.~\cite{se-original-07} also consider more general graph
classes, such as \emph{caterpillars} (trees being paths after the
removal of all leaves), \emph{stars} (trees with at most one inner
vertex called \emph{center}), and \emph{extended stars} (collection of
stars with an additional special root and paths from the special root
to the centers of all stars).  They show that a caterpillar and a path
admit an {\sc SGE} on a grid of size $n \times 2n$, which can be
extended to two caterpillars on a grid of size $3n \times 3n$.
Moreover, they can simultaneously embed two stars on a $3 \times
(n-2)$ grid and extend it to the case of $k$ stars on an
$O(c^k\sqrt{n}) \times O(c^k\sqrt{n})$-grid, where $c$ is a constant.
Finally, the pairs path plus extended star and cycle plus
cycle can be embedded on $O(n^2) \times O(n)$ and $4n \times 4n$
grids, respectively.  The latter two results both extend the case of
two paths (when neglecting the grid size).

The result for two cycles was further extended by Duncan et
al.~\cite{dek-gtldg-04} and Cabello et al.~\cite{beppe-matched-11}.
Duncan et al.~\cite{dek-gtldg-04} show that a graph with maximum
degree~4 has geometric thickness~2.  To this end, they show that two
graphs with maximum degree~2 always admit a simultaneous geometric
embedding.  However, their algorithm computes drawings with
potentially large area.  

Cabello et al.~\cite{beppe-matched-11} show the existence of an {\sc
  SGE} for a \emph{wheel} (union of a star and a cycle on its leaves)
and a cycle.  They moreover give algorithms for the pairs tree plus
\emph{matching} (graph with maximum degree~1) and \emph{outerpath}
(outerplanar graph whose weak dual is a path) plus matching.  The
former algorithm uses only two slope for the matching edges, for the
latter one slope suffices.

Given a planar graph and a path on the same vertices, the order of the
vertices in the path induces a layering on the vertices.  Cappos et
al.~\cite{se-arcs-09} give a linear-time algorithm that computes an
{\sc SGE} of a planar graph and a path if the planar graph is
level-planar with respect to the layering induced by the path.
Angelini et al.~\cite{kaufmann-tp-12} show that every tree of depth~2
has an {\sc SGE} with every path.

\subsection{Examples Without SGE}

In contrast to the positive results, Brass et
al.~\cite{se-original-07} give several examples not admitting an {\sc
  SGE}.  They show the existence of two planar graph without a
simultaneous embedding and extended this result to two outerplanar
graphs.  Two results we present in more detail are the counterexample
for a planar graph and a path by Brass et al.~\cite{se-original-07}
and Erten and Kobourov~\cite{ek-sepgfb-05} and the counterexample of
three paths by Brass et al.~\cite{se-original-07}.

\begin{theorem}
  \label{counter-example}
  There exists a planar graph $G$ and a path $P$ not admitting an {\sc
    SGE}.
\end{theorem}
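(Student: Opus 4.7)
The plan is to exhibit an explicit planar graph $G$ and a Hamiltonian path $P$ on $V(G)$ and to rule out an SGE by combining the combinatorial rigidity of $G$ with the self-avoidance requirement of $P$. Roughly, I would choose $G$ so that its combinatorial embedding pins the geometry tightly, and then pick $P$ whose vertex order forces a self-crossing in every such geometric realisation.

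First, I would take $G$ to be a small $3$-connected planar triangulation, so that by classical results its combinatorial planar embedding is unique up to reflection and the choice of outer face. Any straight-line planar drawing of $G$ must realise this combinatorial embedding, so for every fixed outer face each interior vertex is forced geometrically into a specific triangular face of $G$. This pins down a rigid combinatorial ``inside/outside'' pattern of vertices with respect to every triangle of $G$, independent of the concrete coordinates chosen.

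Second, I would choose $P = v_1,\ldots,v_n$ as a Hamiltonian path whose vertex order is incompatible with every such pattern. The geometric lever is the following observation: once the positions of the vertices are determined (up to small perturbation) by a straight-line planar drawing of $G$, the polygonal chain $P$ is essentially determined as well, and it still has to be self-avoiding. By carefully selecting the order, I would force two non-incident path segments $v_iv_{i+1}$ and $v_jv_{j+1}$ into a configuration where an endpoint of each segment is trapped inside a triangle of $G$ that contains the other endpoint on its exterior. Elementary plane geometry then shows the two segments must cross, contradicting planarity of $P$.

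The main obstacle is the case analysis over all combinatorial embeddings of $G$: the choice of outer face multiplies the number of geometric configurations to rule out. I would control this by choosing $G$ symmetrically enough that the analysis collapses to essentially one case, or by making $G$ small enough that the finite case analysis can be carried out by hand. The more delicate technical step is translating the combinatorial inside/outside constraints coming from the planar embedding of $G$ into strict geometric statements about the placement of $P$'s vertices; once that dictionary is set up, the forced self-intersection of $P$ falls out of standard arguments about chords of simple polygons.
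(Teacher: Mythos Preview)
Your high-level strategy matches the paper's: take a $3$-connected planar graph so that its combinatorial embedding is unique up to reflection and the choice of outer face, then choose a path whose straight-line realisation is forced to self-intersect in every such embedding. Where your proposal remains a plan rather than a proof is precisely at the step you flag as ``the main obstacle'': the case analysis over outer faces. You hope to force a single pair of path segments to cross and to collapse the outer-face analysis by symmetry, but you give no construction and no argument that a single crossing obstruction can survive \emph{every} choice of outer face. In fact, for a single such obstruction this is typically false: one can usually pick an outer face that brings the relevant vertices to the boundary and removes the separation.

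The paper's key idea, which your proposal does not anticipate, is to build in \emph{two} independent crossing obstructions rather than one. Concretely, $G$ contains two overlapping $3$-connected pieces $G'$ and $G''$ (sharing a vertex), and the path is chosen so that each piece forces a self-crossing of $P$ unless the outer face of $G$ is chosen inside that piece. Since $G'$ and $G''$ share no face, no single choice of outer face resolves both obstructions, and at least one crossing of $P$ persists regardless. This two-gadget trick is what turns the open-ended case analysis you describe into a one-line incompatibility argument; without it, your plan as stated does not yet close.
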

\begin{sketch}
  Consider the graph $G$ and the path $P$ as shown in
  Figure~\ref{fig:pathplanar}.  Let $G'$ be the subgraph of $G$
  induced on the vertices $\{1,2,3,4,5\}$, and let $G''$ be the
  subgraph of $G$ induced on the vertices $\{2,6,7,8,9\}$.  Since $G$
  is triconnected fixing the outer face fixes an embedding for
  $G$. With the given outer face of $G$, the path $P$ contains two
  crossings: one involving $(2,4)$, and the other one involving
  $(6,8)$.  

  Graph $G'$ has six faces and unless we change the outer
  face of $G'$ such that it contains the edge $(1,3)$ or $(3,5)$, the
  edge $(2,4)$ is involved in a crossing in the path. Similarly for
  $G''$, unless we change its outer face such that it contains $(2,7)$
  or $(7,9)$, the edge $(6,8)$ is involved in a crossing in the
  path. However $G'$ and $G''$ do not share any faces and removing
  both crossings depends on taking two different outer faces, which is
  impossible. Thus, regardless of the choice for the outer face of
  $G$, path $P$ contains a crossing.
\end{sketch}

\begin{figure}[t]
  \centering
  \includegraphics{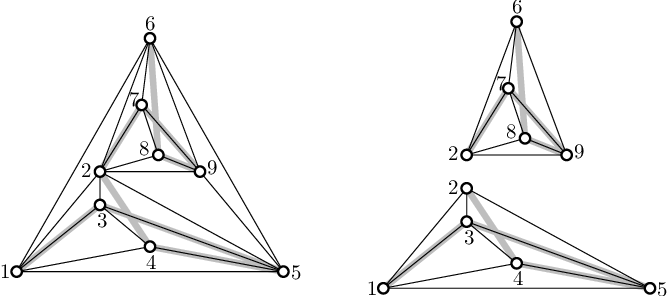}
  \caption{A planar graph $G$ and a path $P$ that do not allow an {\sc
      SGE}.
\label{fig:pathplanar}
}

\bigskip
\end{figure}

\begin{theorem}
  \label{the-three-path-theorem}
  There exist three paths $\1P$, $\2P$ and $\3P$ not admitting an {\sc
    SGE}.
\end{theorem}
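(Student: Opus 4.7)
The plan is to exhibit a concrete instance of three paths on a common vertex set and to show, by a case analysis in the spirit of the proof of Theorem~\ref{counter-example}, that no simultaneous straight-line drawing can make all three paths simple.

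First, I would design three paths $P_1, P_2, P_3$ on a small common vertex set so that every pair of them, together, contains several edge-disjoint ``witness'' substructures reminiscent of the blocks $G'$ and $G''$ used in Theorem~\ref{counter-example}: small subgraphs whose planarity, once the outer face is fixed, forces a specific edge of one of the three paths to lie on the ``wrong'' side of the chain. The trick is to distribute enough such witnesses across the three paths so that no single choice of outer face can simultaneously resolve all of them.

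Second, assuming an SGE exists, I would fix the rotation system of $P_1 \cup P_2 \cup P_3$ and analyze the witnesses pair-by-pair. For each witness, planarity of the containing pair $P_i \cup P_j$ would pin down a local combinatorial embedding, and an argument analogous to Theorem~\ref{counter-example} would conclude that the third path $P_k$ is forced into a self-crossing at that location. Since the witnesses are chosen so that no global face assignment can avoid all such local conflicts at once, some path must self-intersect, contradicting the assumption of an SGE.

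The main obstacle is that, unlike the planar graph $G$ in Theorem~\ref{counter-example}, an individual path is not even biconnected, so one cannot directly reduce the space of combinatorial embeddings to a finite enumeration over outer faces. The key technical step is therefore to engineer the three paths so that \emph{their union} behaves as rigidly as a triconnected graph, while \emph{each individual path} remains forced into restrictive constraints by the other two; in particular, one cannot simply count edges ($3(n-1)$ versus the planarity bound $3n-6$), because as already noted in the discussion around Figure~\ref{fig:union} the non-planarity of the union does not preclude an SGE. Once a suitable instance is found, the verification reduces to finitely many small geometric case checks, each of which mirrors the local crossing argument used for the edges $(2,4)$ and $(6,8)$ in Theorem~\ref{counter-example}.
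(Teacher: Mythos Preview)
Your plan is considerably more involved than the paper's argument, and it overlooks the simple mechanism that actually makes the theorem go through. You correctly observe that non-planarity of $P_1 \cup P_2 \cup P_3$ alone is not enough---but the paper shows that non-planarity of the union \emph{together with one extra combinatorial condition} already suffices, with no case analysis on embeddings or outer faces at all.

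The paper exhibits three explicit paths on nine vertices whose union $G$ (twelve edges) is a subdivision of $K_{3,3}$, hence non-planar. In any straight-line drawing of the nine points, two nonadjacent edges of $G$ must therefore cross. The crucial additional property of the construction is that \emph{every pair of nonadjacent edges of $G$ occurs together in at least one of the three paths}. Whichever two edges cross, they then lie in a common path, and that path self-intersects. That is the entire proof: one non-planarity observation plus a finite check over pairs of edges.

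Your proposal instead tries to port the local ``witness'' technique from Theorem~\ref{counter-example}, where a triconnected planar graph has only finitely many outer-face choices to enumerate. You correctly identify this rigidity as the main obstacle for paths and do not explain how to overcome it; engineering three paths whose union is rigid enough to support that style of argument is at best a much harder design problem than the one the paper actually solves. The insight you are missing is precisely the one you dismissed: the non-planar union \emph{can} be turned directly into a contradiction, not by edge counting, but by arranging that every potential crossing pair is ``covered'' by a single path. This bypasses all embedding analysis and makes the proof a few lines long.
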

\begin{proof}
  A path of $n$ vertices is simply an ordered sequence of $n$ numbers.
  The three paths we consider are: 714269358, 824357169 and 758261439.
  For example, the sequence $714269358$ represents the path
  $(v_7,v_1,v_4,v_2,v_6,v_9,v_3,v_5,v_8)$.  We will write $ij$ for the
  edge connecting $v_i$ to $v_j$.  The union of these paths contain
  the following twelve edges.
  
  $$E=\{14, 16, 17, 24, 26, 28, 34, 35, 39, 57, 58, 69\}$$
 
  It is easy to see that the graph $G$ consisting of these edges is a
  subdivision of $K_{3,3}$ and therefore non-planar: collapsing 1 and
  7, 2 and 8, 3 and 9 yields the classes $\{1,2,3\}$ and~$\{4,5,6\}$.

  It follows that there are two nonadjacent edges of $G$ that cross
  each other. It is easy to check that every pair of nonadjacent edges
  from $E$ appears in at least one of the paths given above.
  Therefore, at least one path will cross itself which completes the
  proof.
\end{proof}

Cabello et al.~\cite{beppe-matched-11} extend the counterexample for
the case that $\1G$ is a path to the case where $\1G$ is a matching.
Moreover, they give an example of six matchings not admitting an {\sc
  SGE}.  Note that this does not directly follow by dividing three
paths without an {\sc SGE} into six matchings, as the resulting
matchings allow crossings that were not allowed before.  Another
extension of the case where $\1G$ is a path was given by Frati et
al.~\cite{se-constrained-09} who give a counterexample where $\1G$ is
a path and $G$ is a set of isolated vertices, that is $\1G$ and $\2G$
are edge disjoint.

The question of whether two trees always admit an {\sc SGE} was open
for several years, before it was answered in the negative by Geyer et
al.~\cite{kaufmann-trees-09} with a construction involving two very
large trees.  This of course extends the result of two outerplanar
graphs not having an {\sc SGE} by Brass et al.~\cite{se-original-07}.
Angelini et al.~\cite{kaufmann-tp-12} further extended it to the case
of a tree and a path without an {\sc SGE}.  More precisely, they give
an example of a tree of depth~4 and an edge disjoint path not having
an {\sc SGE}.  Recall that a tree of depth~2 does always admit a
simultaneous embedding with a path, thus in this case the gap between
positive and negative results is quite small.

\subsection{Related Work}

Frati et al.~\cite{se-constrained-09} consider the restricted case
where each input graph has a prescribed combinatorial embedding.  They
show that the pair path plus star admits an {\sc SGE} even if the
embedding of the star is fixed.  They can extend this result to a
\emph{double-star} (tree with up to two inner vertices) if it is edge
disjoint to the path.  On the other hand they show that fixing the
embedding of two caterpillars may lead to an counterexample, whereas
they admit an {\sc SGE} if the embedding is not fixed.  Another
counterexample is the pair outerplanar graph with fixed embedding plus
edge-disjoint path.

An interesting additional restriction to {\sc SGE}s was considered by
Argyriou et al.~\cite{abks-grsdg-12}, combining {\sc SGE} with the RAC
drawing convention (RAC -- Right-Angular Crossing).  They try to find
an {\sc SGE} such that crossings between exclusive edges of different
graphs are restricted to right-angular crossings.  Argyriou et
al. consider only the case where the edge sets of both graphs are
disjoint.  They present one negative and one positive result for this
problem.  The negative result consists of a wheel and a cycle not
admitting an {\sc SGE} with right-angular crossings.  On the other
hand they show the existence of such a drawing on a small integer grid
for the case that one of the graphs is a path or a cycle and the other
is a matching.  Moreover, they give a linear-time algorithm to compute
such a drawing.

\section{Simultaneous Embedding with Fixed Edges}
\label{sec:sefe}

\todo{new introduction and new figure}In this section we drop the
requirement that edges have to be straight line segments and consider
the {\sc SEFE} problem.  Figure~\ref{fig:sefe-example} shows a {\sc
  SEFE} of the graph and the path from Figure~\ref{fig:pathplanar} not
admitting an {\sc SGE}.  Figure~\ref{fig:sefe-overview} and
\todo{table added}Table~\ref{tab:sefe} illustrate the results on the
problem {\sc SEFE} classified in the three categories described
before.

\begin{figure}[hbt]
  \centering
  \includegraphics{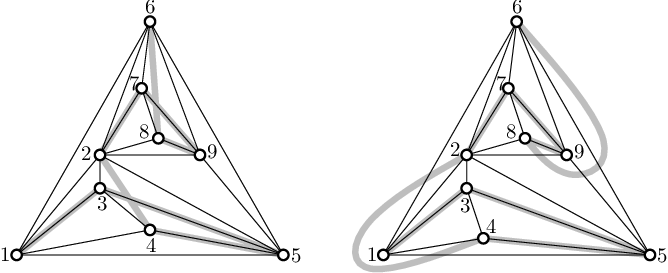}
  \caption{A graph and a path not admitting an {\sc SGE} but a {\sc
      SEFE}.
  \label{fig:sefe-example}}
\end{figure}

\begin{table}[p]
  \heavyrulewidth=1pt
  \newcommand{\row}{\rule{0pt}{2.5ex}}
  \centering
  \begin{tabular}{lcccc} 
    \toprule
    \row
    {\bf SEFE Instance}                                                                                     & {\bf Exist.}                          & {\bf Area}                & {\bf Bends} & {\bf Ref.}              \\
    \midrule
    $\1G$ tree \& $\2G$ path                                                                                & \yes                                  & $O(n) \times O(n^2)$      & 1 \& 0      & {\cite{ek-sepgfb-05}}   \\
    $\1G$ outerplanar \& $\2G$ path                                                                         & \yes                                  & $O(n) \times O(n^2)$      & 1 \& 0      & {\cite{beppe-outer-07}} \\
    $\1G$ outerplanar \& $\2G$ cycle                                                                        & \yes                                  & $O(n^2) \times O(n^2)$    & 1           & {\cite{beppe-outer-07}} \\
    $\1G$, $\2G$ outerpl. \& $G$ collection of paths                                                        & \yes                                  & $O(n^2) \times O(n^2)$    & 1           & {\cite{beppe-outer-07}} \\
    $\1G$ tree \& $\2G$ planar                                                                              & \yes                                  & \multicolumn{2}{c}{ --- } & {\cite{fab-sefe-06}}                  \\
    $\1G$ pseudoforest, $\2G$ planar \& $G$ forest                                                          & \yes                                  & \multicolumn{2}{c}{ --- } & {\cite{fab-sefe-06}}                  \\
    $\1G$ has disj. cycles, $\2G$ planar \& $G$ forest                                                      & \yes                                  & \multicolumn{2}{c}{ --- } & {\cite{juenger-sefespqr-09}}          \\
    \midrule
    characterization of $G$                                                                                 & \yes / \no                            & \multicolumn{2}{c}{ --- } & {\cite{juenger-sefeinter-09}}         \\
    characterization of $\1G$                                                                               & \yes / \no                            & \multicolumn{2}{c}{ --- } & {\cite{sefe-11}}                      \\
    characterization of $\1G$ ($\1G$, $\2G$ outerpl.)                                                       & \yes / \no                            & \multicolumn{2}{c}{ --- } & {\cite{sefe-11}}                      \\
    \midrule
    $\1G$ outerplanar \& $\2G$ planar                                                                       & \no                                   & \multicolumn{2}{c}{ --- } & {\cite{se-original-07}}               \\
    $k$ outerplanar graphs                                                                                  & \no                                   & \multicolumn{2}{c}{ --- } & {\cite{se-original-07}}               \\
    three paths                                                                                             & \no                                   & \multicolumn{2}{c}{ --- } & {\cite{se-original-07}}               \\
    $\1G$ \& $\2G$ outerplanar                                                                              & \no                                   & \multicolumn{2}{c}{ --- } & {\cite{fab-sefe-06}}                  \\
     \bottomrule
     \toprule
    \multicolumn{2}{l}{\row{\bf SEFE Instance}}                                                             & \multicolumn{2}{c}{{\bf Complexity} } & {\bf Ref.}                                                        \\
    \midrule
    \multicolumn{2}{l}{\row three planar graphs}                                                            & \multicolumn{2}{c}{NP-complete }      & {\cite{juenger-sefe-06}}                                          \\
    \multicolumn{2}{l}{\row $\1G$ pseudoforest \& $\2G$ planar}                                             & \multicolumn{2}{c}{$O(n)$ }           & {\cite{juenger-sefespqr-09}}                                      \\
    \multicolumn{2}{l}{\row $\1G$ has $\le 2$ cycles, $\2G$ planar \& $G$ pseudoforest}                     & \multicolumn{2}{c}{$O(n)$ }           & {\cite{juenger-sefespqr-09}}                                      \\
    \multicolumn{2}{l}{\row $G$ star}                                                                       & \multicolumn{2}{c}{$O(n)$ }           & {\cite{adfpr-tsegi-12}}                                           \\
    \multicolumn{2}{l}{\row $G$ consists of disjoint cycles}                                                & \multicolumn{2}{c}{$O(n)$ }           & {\cite{br-drpse-13}}                                              \\
    \multicolumn{2}{l}{\row $G$ consists of components with fixed embeddings}                               & \multicolumn{2}{c}{$O(n^2)$ }         & {\cite{br-drpse-13}}                                              \\
    \multicolumn{2}{l}{\row $G$ has maximum degree~3}                                                       & \multicolumn{2}{c}{polynomial }       & {\cite{s-ttp-13}}                                                 \\
    \multicolumn{2}{l}{\row \makebox[5cm][l]{$\1G$ subdiv.\ of triconnected components \& $\2G$ planar}}    & \multicolumn{2}{c}{polynomial }       & {\cite{s-ttp-13}}                                                 \\
    \multicolumn{2}{l}{\row $G$ biconnected}                                                                & \multicolumn{2}{c}{$O(n)$ }           & {\cite{lubiw-testing-10}}                                         \\
    \multicolumn{2}{l}{\row $G$ biconnected}                                                                & \multicolumn{2}{c}{$O(n)$ }           & {\cite{adfpr-tsegi-12}}                                           \\
    \multicolumn{2}{l}{\row $G$ consists of biconnected components}                                         & \multicolumn{2}{c}{polynomial }       & {\cite{s-ttp-13}}                                                 \\
    \multicolumn{2}{l}{\row $\1G$, $\2G$ biconnected \& $G$ connected}                    & \multicolumn{2}{c}{$O(n^2)$ }         & {\cite{br-spoacep-13}}                                                \\
  \bottomrule
  \end{tabular}
  \caption{A list of graph classes that are either known to always have a
    {\sc SEFE} or that contain counterexamples (table at the top).  For the positive examples bounds on the required area and number of bends per edge are given, provided that they are known.  The symbol \yes / \no\ denotes that a complete characterization of positive and negative instances is given.
    The table at the bottom shows results concerning the computational
    complexity of {\sc SEFE}.}
  \label{tab:sefe}
\end{table}

\begin{figure}[p]
  \centering
    \begin{tikzpicture}[overlay]
      \node at (-5.5*16pt, -1pt-4*16pt) [anchor=north]
      {\cite{ek-sepgfb-05}}; 
      \node at (-5.5*16pt, -1pt-0pt) [anchor=north]
      {\cite{beppe-outer-07}}; 
      \node at (6.5*16pt, -1pt-0pt) [anchor=north]
      {\cite{beppe-outer-07}}; 
      \node at (6.5*16pt, -1pt-4*16pt) [anchor=north]
      {\cite{beppe-outer-07}}; 
      \node at (-9.5*16pt, -1pt-8*16pt) [anchor=north]
      {\cite{fab-sefe-06}}; 
      \node at (-3*16pt, -1pt-8*16pt) [anchor=north]
      {\cite{fab-sefe-06}}; 
      \node at (6.75*16pt, -1pt-8*16pt) [anchor=north]
      {\cite{juenger-sefespqr-09}}; 
      \node at (8.25*16pt, -1pt-11.5*16pt) [anchor=north]
      {\cite{juenger-sefeinter-09}}; 
      \node at (-7.5*16pt, -1pt-11.5*16pt) [anchor=north]
      {\cite{sefe-11}}; 
      \node at (0.5*16pt, -1pt-11*16pt) [anchor=north]
      {\cite{sefe-11}}; 
      \node at (9*16pt, -1pt-18*16pt) [anchor=north]
      {\cite{se-original-07}}; 
      \node at (0.5*16pt, -1pt-18*16pt) [anchor=north]
      {\cite{se-original-07}}; 
      \node at (-7.5*16pt, -1pt-15*16pt) [anchor=north]
      {\cite{se-original-07}}; 
      \node at (0.5*16pt, -1pt-15*16pt) [anchor=north]
      {\cite{fab-sefe-06}}; 
      \node at (-3*16pt, -1pt-28*16pt) [anchor=north]
      {\cite{lubiw-testing-10, adfpr-tsegi-12}}; 
      \node at (-8.5*16pt, -1pt-28*16pt) [anchor=north]
      {\cite{adfpr-tsegi-12}}; 
      \node at (7.75*16pt, -1pt-22*16pt) [anchor=north]
      {\cite{juenger-sefe-06}}; 
      \node at (-7*16pt, -1pt-31*16pt) [anchor=north]
      {\cite{juenger-sefespqr-09}}; 
      \node at (4.5*16pt, -1pt-31*16pt) [anchor=north]
      {\cite{juenger-sefespqr-09}}; 
      \node at (5.75*16pt, -1pt-28*16pt) [anchor=north]
      {\cite{br-spoacep-13}}; 
      \node at (-8.5*16pt, -1pt-22*16pt) [anchor=north]
      {\cite{br-drpse-13}}; 
      \node at (-1*16pt, -1pt-22*16pt) [anchor=north]
      {\cite{br-drpse-13}}; 
      \node at (-9*16pt, -1pt-25*16pt) [anchor=north]
      {\cite{s-ttp-13}}; 
      \node at (-2.75*16pt, -1pt-25*16pt) [anchor=north]
      {\cite{s-ttp-13}}; 
      \node at (6.5*16pt, -1pt-25*16pt) [anchor=north]
      {\cite{s-ttp-13}}; 
    \end{tikzpicture}

  \includegraphics{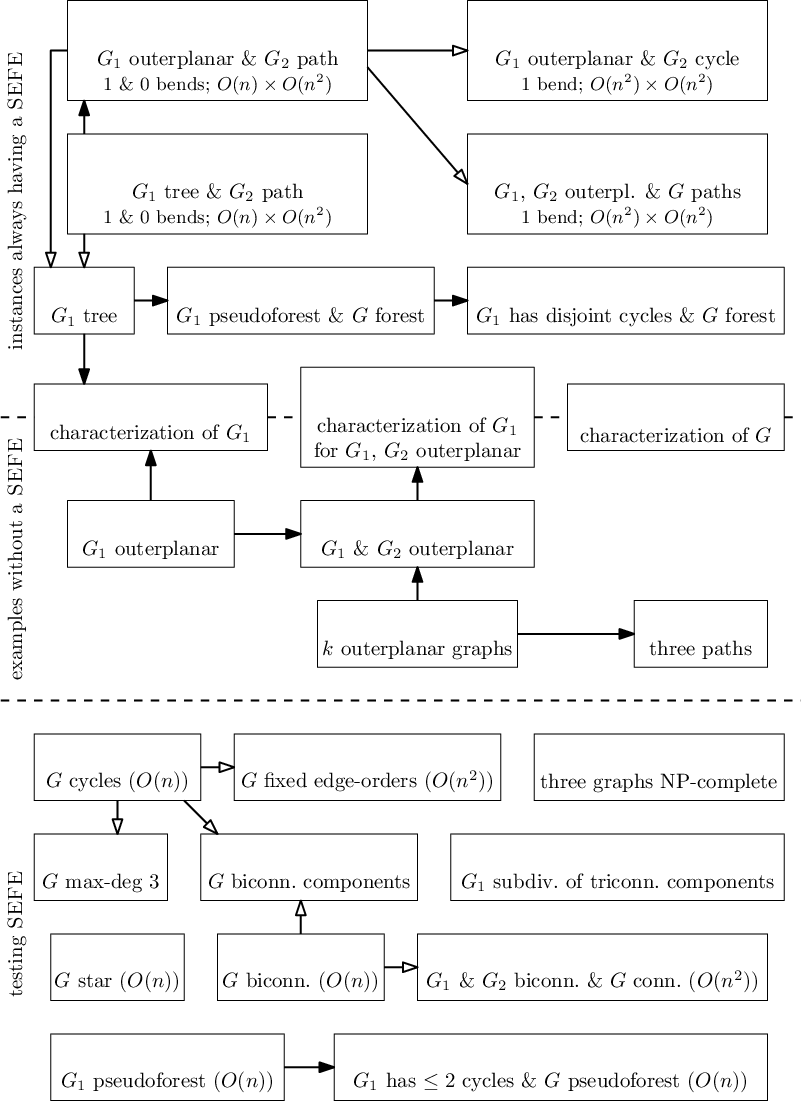}
  \caption{Overview over the so far known results on {\sc SEFE}.  Each box
    represents one result and an arrow highlights that the source-result is
    extended by the target-result.  The arrowheads are empty for the cases in
    which this is only true, if the number of bends per edge, the consumed grid
    size or the necessary running time is neglected.  Note that transitive
    arrows are omitted.
  \label{fig:sefe-overview}
}
\end{figure}

\subsection{Positive and Negative Examples}

We start with instances that always admit a {\sc SEFE}.  Erten and
Kobourov~\cite{ek-sepgfb-05} show that a tree and a path can always be
embedded simultaneously.  They additionally give an algorithm finding
a simultaneous embedding in $O(n)$ time on a grid of size $O(n) \times
O(n^2)$ such that the edges of $\1G$ and $\2G$ have at most one and
zero bends per edge, respectively.  Note that a grid of size $O(n^2)
\times O(n^3)$ is necessary if the bends are required to be drawn on
grid points. Di Giacomo and Liotta~\cite{beppe-outer-07} extend this
result to the case of an outerplanar graph and a path with the same
grid and bend requirements.  They extend it further to the case where
$\1G$ and $\2G$ are outerplanar and the common graph $G$ is a
collection of paths and to the case where $\1G$ is outerplanar and
$\2G$ is a cycle.  However, in both cases a grid of size $O(n^2)
\times O(n^2)$ and up to one bend per edge are required.  If the grid
and bend requirements are completely neglected, the results
considering the pairs tree plus path and outerplanar graph plus path
can be extended to the case where one of the two graphs is a tree.

Frati~\cite{fab-sefe-06} shows how a tree $\1G$ can be simultaneously
embedded with an arbitrary planar graph $\2G$.  This algorithm still
works if $\1G$ contains one additional edge that is not a common edge,
yielding the result that every graph with at most one cycle (a
\emph{pseudoforest}) can be embedded simultaneously with every other
planar graph if the common graph does not contain this cycle.  Fowler
et al.~\cite{juenger-sefespqr-09} extend this result further to the
case where $\1G$ contains only disjoint cycles and the common graph
$G$ does not contain a cycle.

Aside from instances always having a {\sc SEFE}, there are also
examples that cannot be simultaneously embedded.  Brass et
al.~\cite{se-original-07} give examples for $k$ outerplanar graphs,
three paths and an outerplanar graph plus a planar graph not having a
{\sc SEFE}.  The results concerning outerplanar graphs can be extended
to the case where both graphs are outerplanar~\cite{fab-sefe-06}.  

In between the positive and negative results there are some
characterizations stating which instances have a {\sc SEFE} and which
do possibly not.  Fowler et al.~\cite{sefe-11} give a characterization
of the graphs $\1G$ having a {\sc SEFE} with every other planar graph.
This of course extends all results concerning only~$\1G$.  In
particular, the results that a tree can be simultaneously embedded
with every other graph, whereas an outerplanar graph cannot, are
extended.  This characterization essentially requires that $\1G$ must
not contain a subgraph homeomorphic to $K_3$ (a triangle) and an edge
not attached to this $K_3$; see Figure~\ref{fig:sefe-counterexample}
for an example.  The considerations made for this characterization
additionally yield a characterization for the biconnected outerplanar
graphs $\1G$ having a simultaneous embedding with every other
outerplanar graph $\2G$.  This of course extends the result that two
outerplanar graphs possibly do not have a {\sc SEFE}.  

\begin{figure}[tb]
  \centering
  \includegraphics{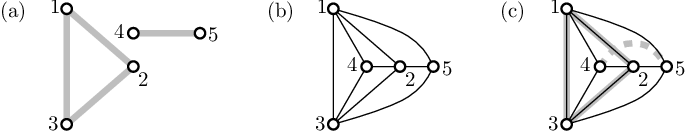}
  \caption{$\1G$ (a) and $\2G$ (b) do not admit a {\sc SEFE} (c) as
    $\2G$ forces the vertices $4$ and $5$ to different sides of the
    triangle $\Delta123$.
  \label{fig:sefe-counterexample}
}
\end{figure}

Another characterization, in terms of the common graph, is given by
J{\"u}nger and Schulz~\cite{juenger-sefeinter-09}.  They show that two
graphs can be simultaneously embedded if the common graph $G$ has only
two embeddings, whereas in all other cases graphs $\1G$ and $\2G$ with
the common graph $G$ not having a {\sc SEFE} can be constructed.  They
additionally show that finding a {\sc SEFE} is equivalent to finding
combinatorial embeddings of $\1G$ and $\2G$ inducing the same
combinatorial embedding, that is the same orders of edges around
vertices and the same relative positions of connected components to
one another, on the common graph~$G$~\cite[Theorem
4]{juenger-sefeinter-09}.  Note that it is not obvious and not even
true for more than two graphs~\cite{fab-embedded-11}.  As this result
is heavily used in most algorithms solving the decision problem {\sc
  SEFE}, we state it as a theorem.

\begin{theorem}
  Two graphs $\1G$ and $\2G$ with common subgraph $G$ admit a {\sc
    SEFE} if and only if they admit combinatorial embeddings inducing
  the same embedding on~$G$.
\end{theorem}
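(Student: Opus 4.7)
The plan is to prove both directions separately, using the fact that for planar graphs a combinatorial embedding determines a drawing up to homeomorphism of the plane.

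For the forward direction, suppose $\1G$ and $\2G$ admit a {\sc SEFE}, realized by planar drawings $\1D$ and $\2D$. Each drawing induces a combinatorial embedding $\ii{\mathcal E}$ of $\ii G$. Since common vertices occupy the same points and common edges are drawn as the same curves in $\1D$ and $\2D$, restricting either drawing to the common subgraph yields the same point-set realization of $G$. In particular the cyclic orders of edges around every vertex of $G$ agree, and so do the relative positions of the connected components of $G$ (read off from the drawing of $G$ alone). Hence $\1{\mathcal E}$ and $\2{\mathcal E}$ induce the same combinatorial embedding on $G$.

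For the backward direction, suppose we are given combinatorial embeddings $\1{\mathcal E}$ of $\1G$ and $\2{\mathcal E}$ of $\2G$ whose restrictions to $G$ coincide in a common combinatorial embedding $\mathcal E$. First, realize $\1{\mathcal E}$ as a planar drawing $\1D$ of $\1G$ and $\2{\mathcal E}$ as a planar drawing $\2D$ of $\2G$ by standard planarity realizability (Fáry or a topological version thereof). The restrictions $\1D|_G$ and $\2D|_G$ are then two planar drawings of $G$ that realize the same combinatorial embedding $\mathcal E$. The key step is to invoke the topological uniqueness of planar embeddings: any two planar drawings of $G$ with identical rotation systems, outer face, and relative positions of components are related by an orientation-preserving homeomorphism $\varphi$ of the sphere (equivalently, of the plane once an outer face is fixed) carrying $\1D|_G$ onto $\2D|_G$. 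Applying $\varphi^{-1}$ to $\2D$ produces a new planar drawing $\2D'$ of $\2G$ that realizes $\2{\mathcal E}$ and whose restriction to $G$ coincides pointwise with $\1D|_G$. The pair $(\1D,\2D')$ is then a {\sc SEFE}.

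The main obstacle is the topological-uniqueness step in the backward direction: one has to argue carefully that agreement of the rotation systems together with agreement on the outer face and on the relative positions of connected components of $G$ is enough to produce a homeomorphism aligning the two drawings of $G$. For a connected $G$, this is classical (Whitney-type uniqueness for $2$-cell embeddings together with the standard extension from the combinatorial to the topological embedding). For disconnected $G$, one handles the components one at a time: inside the face of $\1D|_G$ where a component should sit (as prescribed by $\mathcal E$), pick a disk containing the corresponding component of $\2D|_G$ and apply the connected case to align the two drawings of that component via a homeomorphism supported in that disk; composing these local homeomorphisms yields the required global $\varphi$. Once this step is established, the rest of the argument is routine.
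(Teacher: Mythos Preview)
The paper does not actually supply its own proof of this theorem: it states the result, attributes it to J\"unger and Schulz~\cite{juenger-sefeinter-09} (their Theorem~4), and remarks that ``it is not obvious and not even true for more than two graphs.'' So there is no in-paper argument to compare against; I can only assess your proposal on its own merits.

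Your argument is essentially correct and is the natural topological proof. The forward direction is immediate, and for the backward direction the homeomorphism trick is exactly the right idea: realize each embedding separately, observe that the two induced drawings of $G$ are equivalent as embeddings of $G$ in the plane, pull one back by a plane homeomorphism, and you are done. You have also correctly identified the only genuinely delicate point, namely the disconnected case, and your component-by-component construction of the homeomorphism (iterating over the nesting tree of components, using local homeomorphisms supported in the relevant faces and fixing their boundaries) is the standard way to handle it. One small thing worth making explicit is that the ``same combinatorial embedding on $G$'' in this statement must be read, as the paper says, to include both the rotation system \emph{and} the relative positions (nesting) of the connected components and a consistent choice of outer face; your proof already uses exactly this data, but it is good to state upfront that these are part of the hypothesis.

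It is also worth noting why your argument, though correct here, does not extend to three or more graphs in general (which is the point of the paper's caveat). Your homeomorphism aligns $\2D$ with $\1D$ on their common part; for a third graph $\3G$ you could similarly align $\3D$ with $\1D$ on $G_{13}$, but then there is no reason the resulting drawings of $\2G$ and $\3G$ agree on $G_{23}\setminus \1G$. For two graphs this obstruction cannot arise, which is precisely why the theorem holds in that case.
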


\subsection{Testing SEFE}

Since {\sc SEFE} has positive and negative instances, it would be nice
to have an algorithm deciding for given graphs, whether they can be
embedded simultaneously.  If more than two graphs are allowed, this
problem is known to be NP-complete~\cite{juenger-sefe-06}, whereas the
complexity for two graphs is still open.  However, there are several
results solving {\sc SEFE} for special cases. 

Fowler at al.~\cite{juenger-sefespqr-09} show how to test {\sc SEFE},
if $\1G$ is a pseudoforest, that is a graph with at most one cycle.
Note that, as mentioned above, such an instance always has a {\sc
  SEFE} if this single cycle is not contained in $G$.  This result can
be extended to the case where $\1G$ contains up to two cycles, if $G$
does not contain the second cycle, that is $G$ is a pseudoforest.  To
achieve this result the following auxiliary problem was solved.  Given
a planar graph $G$ with a designated cycle $C$ and a partition
$\mathcal P = \{P_1, \dots, P_k\}$ of the vertices not contained in
$C$, does $G$ admit a planar embedding, such that all vertices in
$P_i$ are on the same side of the cycle for every set $P_i$?  Note
that this again is a constrained embedding problem, showing that
constrained and simultaneous embedding are closely related.  Despite
early effort~\cite{sefe-11}, testing {\sc SEFE} for two outerplanar
graphs remains open.

Haeupler et al.~\cite{lubiw-testing-10} give a linear-time algorithm
to solve {\sc SEFE} for the case that the common graph is biconnected.
Their solution is an extension of the planarity testing algorithm by
Haeupler and Tarjan~\cite{ht-papqt-08}.  This planarity testing
algorithm starts with a completely unembedded graph and adds vertices
iteratively, such that the unembedded part is always connected,
ensuring that it can be assumed to lie in the outer face of all
embedded components.  While inserting vertices, they keep track of the
possible embeddings of the embedded parts by representing the possible
orders of half-embedded edges around every component with a PQ-tree
having these edges as leaves.  In a \emph{PQ-tree} every inner node is
either a Q-node fixing the order of edges incident to it up to a flip
or a P-node allowing arbitrary orders.  In this way a PQ-tree
represents a set of possible orders of its leaves.

A completely different approach is used by Angelini et
al.~\cite{adfpr-tsegi-12} to solve {\sc SEFE} in linear time if the
common graph is biconnected.  They choose an order for the common
graph bottom up in its SPQR-tree such that the private edges can be
added.

Another approach by Bl\"asius and Rutter~\cite{br-spoacep-13}
also uses PQ-trees.  They use that the possible orders of edges around
every vertex of a biconnected planar graph can be represented by a
PQ-tree, yielding a set of PQ-trees, one for each vertex.  To obtain a
planar embedding, the orders for the PQ-trees have to be chosen
consistently.  Bl\"asius and Rutter define the problem {\sc
  Simultaneous PQ-Ordering} asking for orders in PQ-trees that are
chosen consistently, which can, among other applications, be used to
represent all planar embeddings of a biconnected graph.  This extends
to the case of two biconnected planar graphs enforcing shared edges to
be ordered the same and thus yields a quadratic time algorithm for
{\sc SEFE} if $\1G$ and $\2G$ are biconnected and $G$ is connected.
The latter requirement comes from the fact that only orders of edges
around vertices are taken into account, relative positions of
connected components to one another are neglected.  Note that this
result extends the case where $G$ is biconnected for the following
reason.  If $G$ is biconnected, then $G$ is completely contained in a
single block (maximal biconnected component) of $\1G$ and $\2G$.
Thus, even if $\1G$ or $\2G$ are not biconnected, they contain only
one block that is of interest, all other blocks can simply be attached
to this block.  

The result by Bl\"asius and Rutter can be slightly extended to the
case where the graphs $\1G$ and $\2G$ contain cut-vertices incident to
at most two non-trivial blocks (blocks not consisting of a single
edge), including the special case where both graphs have maximum
degree~5.  The {\sc Simultaneous PQ-Ordering} approach again shows the
strong relation between simultaneous and constrained embedding as in
an instance of {\sc SEFE} the two input graphs constrain the possible
orders of some of the edges around vertices of one another with
PQ-trees.

Angelini et al.~\cite{adfpr-tsegi-12} show the equivalence between
{\sc SEFE} and a constrained version of the {\sc Partitioned 2-Page
  Book Embedding} problem. An instance of {\sc Partitioned 2-Page Book
  Embedding} is a graph and a partition of its edges into two subsets.
It asks whether all vertices can be arranged on a straight line (the
\emph{spine}) such that each of the edge partitions can be embedded
without crossings in one of the two incident half-planes (\emph{pages}
of the book).  {\sc Partitioned $T$-Coherent 2-Page Book Embedding}
additionally has a tree as input with the vertices of the graph as
leaves.  It is then required that the tree admits an embedding such
that the order of its leaves is equal to the order of vertices on the
spine.  In other words, the allowed orders of vertices on the spine is
constrained by a PQ-tree containing no Q-nodes.  Angelini et
al.~\cite{adfpr-tsegi-12} prove the following theorem and we sketch
their proof here.

\begin{theorem}
  The problems {\sc SEFE} for two graphs with connected intersection
  and {\sc Partitioned $T$-Coherent 2-Page Book Embedding} have the
  same time complexity.
\end{theorem}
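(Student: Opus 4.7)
The plan is to give polynomial-time reductions in both directions, thereby establishing that the two problems agree in complexity up to polynomial factors. The common thread is the characterization stated earlier: a SEFE of $\1G$ and $\2G$ with common graph $G$ exists iff their combinatorial embeddings can be chosen to induce the same embedding on $G$. When $G$ is connected, such a common embedding is captured by a rotation system at each vertex of $G$, and the $\1G$- and $\2G$-exclusive edges must then be routed into the resulting faces so that each face individually remains planar. It is this face-local planarity that I expect to translate into the two-page structure, with one page collecting all $\1G$-exclusive edges and the other all $\2G$-exclusive edges.

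For the reduction from SEFE to Partitioned $T$-Coherent 2-Page Book Embedding, I would fix a spanning tree of $G$ and read off the boundary traversal obtained after cutting $G$ open along that tree; this traversal provides a linear spine of vertex-incidences. Each $\1G$-exclusive edge becomes a page-$1$ edge joining the two spine positions of its endpoints, each $\2G$-exclusive edge becomes a page-$2$ edge, and the admissible spine orders as the embedding of $G$ varies are captured by the embedding PQ-tree of $G$ derived from its SPQR-tree; this PQ-tree, after the standard conversion to its underlying tree, plays the role of $T$. For the reverse direction, I would build a connected planar graph $G$ whose admissible rotation systems realise precisely the leaf orders permitted by the input tree $T$, using standard rigid biconnected gadgets (wheels for P-nodes, triangulated fans for Q-nodes) composed according to the shape of $T$. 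The edges $E_1$ and $E_2$ of the two pages are then attached to $G$ as the private edges of $\1G = G \cup E_1$ and $\2G = G \cup E_2$, so that a SEFE of $(\1G, \2G)$ translates back to a 2-page book embedding whose spine order is $T$-coherent.

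The main obstacle I expect is making the back-and-forth translation of embedding freedom precise. The tree $T$ constrains the cyclic order of vertex-appearances on the spine globally, whereas the embedding freedom of a connected planar $G$ is distributed across its SPQR-tree, with Q-nodes offering only binary flips and P-nodes offering arbitrary permutations of their virtual edges. Showing that the admissible spine orders emerging from $G$ match exactly the leaf orders of $T$ (and conversely, that every $T$ is realisable by a gadget $G$), and simultaneously arguing that each exclusive edge is forced into the intended face so that the two pages really do correspond to the two input graphs, is the delicate part where the complexity-preservation of the reduction is established.
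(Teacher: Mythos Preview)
Your proposal misses the key idea and, as written, does not yield a correct reduction in either direction.

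In the forward direction you attempt to encode the embedding freedom of a general connected common graph~$G$ directly, via an ``embedding PQ-tree derived from its SPQR-tree'', and then use that as the constraining tree~$T$. This fails for two reasons. First, the problem {\sc Partitioned $T$-Coherent 2-Page Book Embedding} is defined for an \emph{ordinary} tree~$T$: the allowed spine orders are exactly the leaf orders of some planar embedding of~$T$, i.e., every internal node behaves like a P-node. The structure you derive from the SPQR-tree will contain rigid (R-) and series (S-) components that force Q-node behaviour; stripping those to get an ``underlying tree'' strictly enlarges the set of permitted orders and destroys the equivalence. Second, even setting aside the P/Q issue, cutting $G$ open along a spanning tree leaves the non-tree edges of~$G$ unaccounted for, and a vertex of~$G$ will appear several times along the Euler tour, so it is unclear which occurrences a private edge should connect. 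The paper avoids all of this by a preprocessing step you omit: it first shows that any {\sc SEFE} instance with connected common graph can be transformed into an equivalent instance whose common graph is a \emph{tree} and whose private edges are incident only to leaves. Once $G$ is a tree, one simply takes $T=G$; the Euler tour of~$T$ is the spine, each leaf occurs exactly once, and the two private edge sets become the two pages. That reduction to a tree is the crux of the argument, not an afterthought.

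In the reverse direction you overcomplicate matters and your gadget choice is inverted. Since~$T$ is an ordinary tree, every internal node already permits arbitrary cyclic permutations of its children; the correct common graph is therefore~$T$ itself, with the page edges attached as private edges of~$\1G$ and~$\2G$. There is no need for rigid gadgets, and in fact a wheel would impose a fixed cyclic order (Q-node behaviour), which is the opposite of what a P-node requires.
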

\begin{sketch}
  Angelini et al.~\cite{adfpr-tsegi-12} first show that an instance of
  {\sc SEFE} where the common graph is connected can be modified
  (yielding an equivalent instance) such that the common graph is a
  tree.  Moreover, each private edge is incident to leaves of this
  tree.  They then show the equivalence to an instance of {\sc
    Partitioned $T$-Coherent 2-Page Book Embedding} where the common
  graph is the constraining tree, the leaves of this tree are the
  vertices that need to be placed on the spine and the private edges
  of each of the graphs is one of the partitions.  

  In the following we sketch this construction using the example in
  Figure~\ref{fig:sefe-book-emb}.  The instance in~(a) having a tree
  $T$ as common graph such that each private edge is incident to a
  leaf admits a {\sc SEFE}.  All private edges are embedded outside
  the dashed cycle around $T$ in~(b) containing all its leaves.
  Choosing another face as outer face and cutting the cycle at an
  arbitrary position yields a {\sc SEFE} where all leaves of $T$ are
  embedded on a straight line~(c) with all private edges on the same
  side.  This directly yields the {\sc Partitioned $T$-Coherent 2-Page
    Book Embedding} in~(d) of the private edges respecting the tree
  $T$.  This shows the equivalence of {\sc SEFE} and {\sc Partitioned
    $T$-Coherent 2-Page Book Embedding} as the constructions works the
  same in the opposite direction.  
\end{sketch}

  \begin{figure}[tb]
    \centering
    \includegraphics{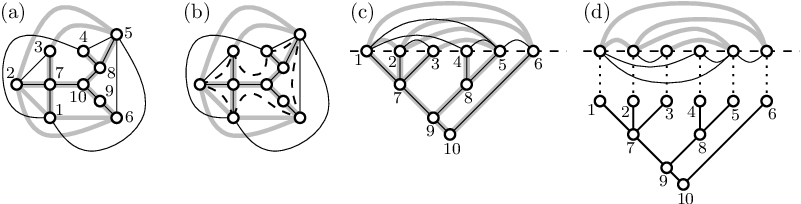}
    \caption{Equivalence of an instance of {\sc SEFE} and the
      corresponding instance of {\sc Partitioned $T$-Coherent 2-Page
        Book Embedding}. 
    \label{fig:sefe-book-emb}
}
  \end{figure}

For the restricted case that $T$ is a star, {\sc Partitioned
  $T$-Coherent 2-Page Book Embedding} reduces to the problem {\sc
  Partitioned 2-Page Book Embedding} that can be solved in linear
time~\cite{hn-tpbecgp-09}.  Thus the above result directly implies
that {\sc SEFE} can be solved in linear time if the common graph is a
star.

All results mentioned thus far require $G$ to be connected and most
results also require $\1G$ and $\2G$ to be connected.  Bl{\"a}sius and
Rutter~\cite{br-drpse-13} consider the case where this does not hold.
They show that it can be assumed without loss of generality that both
graphs $\1G$ and $\2G$ are connected.

In the case that $G$ is connected, one only has to deal with orders of
edges around vertices and can neglect relative positions of connected
components to one another.  Bl{\"a}sius and Rutter approach {\sc SEFE}
from the opposite direction, caring only about the relative positions,
neglecting the orders of edges around vertices.  More precisely, they
give a linear-time algorithm solving {\sc SEFE} if the common graph is
a set of disjoint cycles.  They can extend this result to a
quadratic-time algorithm for the case where $G$ consists of arbitrary
connected components, each with a fixed planar embedding.  Both
results extend to an arbitrary number of graphs with sunflower
intersection.  Recall that sunflower intersection means that all
graphs intersect in the same common subgraph.  Moreover, they give a
succinct representation of all simultaneous embeddings.

\todo{new result by Schaefer added} A completely different, algebraic
approach is presented by Schaefer~\cite{s-ttp-13}.  It is based on the
Hanani-Tutte theorem~\cite{h-uwukr-34,t-ttcn-70} stating that a graph
is planar if and only if its \emph{independent odd crossing number}
is~0.  The independent odd crossing number of a drawing is the number
of pairs of non-adjacent edges whose number of crossings is odd.  The
independent odd crossing number of a graph is its minimum over all
drawings.  Thus, by the Hanani-Tutte theorem, testing planarity is
equivalent to testing whether this crossing number is~0.  The latter
condition can be formulated as a system of linear equations over the
field of two elements, leading to a simple polynomial-time planarity
algorithm.  Schaefer extends this result to other notions of
planarity.  In particular, it is shown that {\sc SEFE} can be solved
in polynomial time for three interesting cases, namely (1) if the
common graph $G$ consists of disjoint biconnected components and
isolated vertices, (2) if the common graph has maximum degree~3, and
(3) if $\1G$ is the disjoint union of subdivisions of triconnected
graphs.  When neglecting the slower running time, this extends several
of the results known before; see Figure~\ref{fig:sefe-overview}.

\subsection{Related Work}

A result not really fitting in one of the three above classes by
Duncan et al.~\cite{dek-gtldg-04} considers the restricted case of
{\sc SEFE} where each edge has to be a sequence of horizontal and
vertical segments with at most one bend per edge.  They show that two
graphs with maximum degree~2 always admit such a {\sc SEFE} on a grid
of size $O(n) \times O(n)$ by adapting their linear-time algorithm
computing an {\sc SGE} for these types of graphs (on a larger grid).

Angelini et al.~\cite{fab-embedded-11} consider the case where the
embedding of each of the input graphs is already fixed.  With this
restriction {\sc SEFE} becomes trivial for two graphs since it remains
to test whether the two graphs induce the same embedding on the common
graph.  They show that it can also be decided efficiently for three graphs.
However, it becomes NP-hard for at least fourteen graphs.  They also
consider the problem {\sc SGE} for the case that the embedding of each
graph is fixed and show that it is NP-hard for at least thirteen
graphs.

Schaefer~\cite{s-ttp-13} shows that several other notions of planarity
are related to {\sc SEFE}.  In particular, the well-studied cluster
planarity problem reduces to {\sc SEFE}, providing further incentive
to study its complexity.

\section{Simultaneous Embedding}
\label{sec:se}

In the most restricted version of the problem, {\sc SGE}, we insist
that vertices are placed in the same position, and edges must be
straight-line segments. The {\sc SEFE} setting relaxes the
straight-line condition but maintains that edges common to multiple
graphs are realized the same way in each. In the least restrictive
setting, {\sc SE}, we allow the same edge to be realized differently
in different graphs.

It has already been mentioned that simultaneous embedding of multiple
graphs can be thought of as a generalization of the notion of
planarity. A classical result about planar graphs connects the notion
of a planar graph with that of a straight-line, crossing-free drawing
thereof. Specifically, Wagner in 1936~\cite{w-bzv-36}, F{\'a}ry in
1948~\cite{f-slrpg-48}, and Stein in 1951~\cite{s-cm-51} independently
show that if a graph has a drawing without crossings, using arbitrary
curves as edges, then there exists a drawing of the graph also without
crossings, but with edges drawn as straight-line segments. For
multiple graphs, however, this result does not hold. That is, given
several graphs on the same $n$ vertices, we can surely realize each
graph without crossings, using arbitrary curves as edges and the same
vertex positions for each graph. But (except in very special
circumstances such as the positive examples in the
Section~\ref{sec:sge}) we cannot guarantee that there exist vertex
positions that allow the realization of each graph with straight-line
segments and without crossings. If this were true, then the vertex
positions would be a {\em universal pointset} for graphs on $n$
vertices, and it is known that universal pointsets of linear size do
not exist~\cite{fpp-hdpgg-90}.

Pach and Wenger~\cite{pw-epgfvl-98} show that every planar graph can
be drawn without crossings with a prespecified position for every
vertex.  Thus, for every pair of planar graphs an {\sc SE} can be
created by drawing the first graph arbitrarily and the second graph to
the vertex positions specified by the first drawing.  Thus, there are
neither negative examples nor is it necessary to have testing
algorithms.  However, the drawing of the second graph may have
linearly many bends per edge, thus it is of interest to find an {\sc
  SE} with fewer bends. 

Erten and Kobourov~\cite{ek-sepgfb-05} show that every two graphs can
be drawn simultaneously in $O(n)$ time with at most three bends per
edge on an $O(n^2) \times O(n^2)$ grid ($O(n^3) \times O(n^3)$ if
bends need to be placed on grid points), where $n$ is the number of
vertices.  \todo{theorem + proof + figure added} To achieve this
result, they combine the construction of Brass et
al.~\cite{se-original-07} to create an {\sc SGE} of two paths (see
Theorem~\ref{the-two-path-theorem} in Section~\ref{sec:sge}) with a
technique by Kaufmann and Wiese~\cite{kw-evpfb-02}, who show that
every planar graph can be drawn with at most two bends per edge if the
allowed vertex positions are restricted to a set of points.  We
include the main result from this paper along with a proof sketch.

\begin{figure}[tb]
  \centering
  \includegraphics{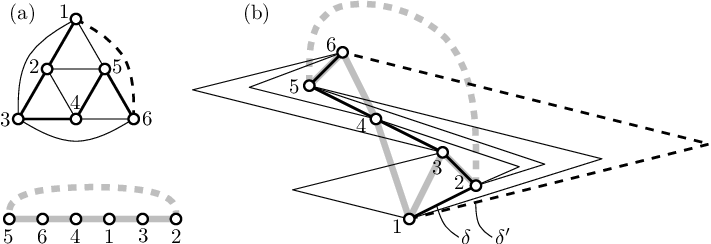}
  \caption{(a)~The cycle $\2H$ (gray) with the path $\2P$ (not dashed)
    and the graph $\1G$ containing the Hamiltonian cycle $\1H$ (bold)
    and the Hamiltonian path $\1P$ (bold, not dashed).  (b)~The
    drawing of $\1G$ and $\2P$ according to the construction of
    Theorem~\ref{thm:se-few-bends}.
    \label{fig:se-few-bends}
  }
\end{figure}

\begin{theorem}
  \label{thm:se-few-bends}
  For two planar graphs $\1G$ and $\2G$ an {\sc SE} with at most three
  bends per edge on an $O(n^2) \times O(n^2)$ grid can be found in
  linear time.
\end{theorem}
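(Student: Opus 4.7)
\begin{sketch}
The plan is to combine the two-path construction from Theorem~\ref{the-two-path-theorem} with the point-set drawing technique of Kaufmann and Wiese~\cite{kw-evpfb-02}, using Hamiltonian subgraphs as the bridge between them. Since the common graph in an {\sc SE} may be arbitrary and the two graphs are drawn independently (edges in $G$ may be realized twice), we can treat $\1G$ and $\2G$ as separate drawing tasks that only share vertex coordinates.

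First, I would augment each input graph so that it contains a Hamiltonian path on the full vertex set $V$: extend $\1G$ to a planar supergraph $\1G^+ \supseteq \1G$ with Hamiltonian path $\1P$, and similarly $\2G$ to $\2G^+$ with Hamiltonian path $\2P$. Any planar graph admits such an augmentation, for instance by augmenting to a $4$-connected planar triangulation and using Tutte's theorem on the existence of a Hamiltonian cycle, then deleting one edge. Next, I would apply Theorem~\ref{the-two-path-theorem} to the pair $(\1P, \2P)$: place every vertex $u$ at the grid point $(\1p(u), \2p(u)) \in \{1,\dots,n\}^2$, so that $\1P$ becomes an $x$-monotone polygonal chain and $\2P$ a $y$-monotone one. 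This fixes the vertex positions common to both drawings and already embeds $\1P$ and $\2P$ in a planar way with one bend per edge.

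Now I would invoke the Kaufmann--Wiese result, which says that every planar graph with a prescribed Hamiltonian path, whose vertices are placed at a given point set so that the path forms a monotone polyline, can be completed to a planar drawing by routing the non-path edges with at most two extra bends per edge; the routing lives on a grid whose size is quadratic in the number of points. Applied to $\1G^+$ with Hamiltonian path $\1P$ and the coordinates fixed above, this yields a planar drawing of $\1G^+$ on an $O(n^2)\times O(n^2)$ grid with at most three bends per edge (one for the path shape, at most two for the detour). The symmetric application to $\2G^+$ using $\2P$ yields a planar drawing of $\2G^+$ at the same vertex positions, again with at most three bends per edge. Deleting the augmenting edges gives planar drawings of $\1G$ and $\2G$ that share vertex positions, i.e.\ an {\sc SE}.

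The main obstacle is verifying that the Kaufmann--Wiese routing can indeed be carried out on top of the specific vertex placement coming from Theorem~\ref{the-two-path-theorem} while keeping both the bend count at three and the grid size quadratic. Concretely, the hard point is that their construction typically assumes points in convex position or along a straight line; here the points lie on an $x$-monotone (resp.\ $y$-monotone) staircase, so one must argue that the two-bend routing generalizes to vertices lying on a monotone polyline, or alternatively deform coordinates to convex position and re-discretize without blowing up the grid beyond $O(n^2)\times O(n^2)$. Linear running time then follows because augmentation, the two-path embedding of Theorem~\ref{the-two-path-theorem}, and the Kaufmann--Wiese routing can each be implemented in $O(n)$ time.
\end{sketch}
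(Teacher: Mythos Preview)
Your overall plan---combine the two-path placement of Theorem~\ref{the-two-path-theorem} with Hamiltonian paths and the Kaufmann--Wiese technique---is exactly the paper's approach.  A few details need correction, however.

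First, you cannot in general augment a planar graph to a 4-connected planar triangulation \emph{on the same vertex set}; a degree-3 vertex sitting inside a triangular face can never reach degree~4 by adding edges alone.  What Kaufmann and Wiese actually do is \emph{subdivide} each edge at most once before augmenting, and the subdivision vertex later becomes one of the bends.  This is also where the bound of three comes from: in the 4-connected case every edge receives at most \emph{one} bend, and an original edge that was split into two such edges acquires a third bend at the subdivision point.  Your accounting ``one for the path shape, two for the detour'' is off---the Hamiltonian-path edges are straight segments with zero bends.

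Second, you rightly flag the obstacle that the Kaufmann--Wiese routing assumes collinear or convex point positions, whereas here the points lie on a monotone staircase.  The paper resolves this directly rather than by deformation: since $\1P$ is $y$-monotone, all its edges have absolute slope at least some $\delta>0$; every non-path edge of $\1G$ is then drawn as two segments whose slopes are chosen strictly between $-\delta$ and $\delta$, nested so that they avoid both the path and each other.  This slope argument is the missing technical idea in your sketch, and it is what keeps the grid at $O(n^2)\times O(n^2)$.

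Finally, for linear running time you need the algorithm of Chiba and Nishizeki for Hamiltonian cycles in 4-connected planar graphs, not merely Tutte's existence theorem.
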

\begin{sketch}
  Initially, assume that $\1G$ and $\2G$ are 4-connected.  This
  assumption is removed later using the technique of Kaufmann and
  Wiese.

  We can compute Hamiltonian cycles $\1H$ and $\2H$ of $\1G$ and
  $\2G$, respectively, using the algorithm of Chiba and
  Nishizeki~\cite{cn-hcp-89}.  Let $\1P$ and $\2P$ be Hamiltonian
  paths contained in $\1H$ and $\2H$, respectively; see
  Figure~\ref{fig:se-few-bends}(a) for an example.  As in the proof of
  Theorem~\ref{the-two-path-theorem}, we can construct an {\sc SGE} of
  $\1P$ and $\2P$ such that $\1P$ is $y$-monotone, while $\2P$ is
  $x$-monotone.  We show how to add the remaining edges of $\1G$ and
  the construction is similar for $\2G$.

  We consider the absolute values of the slopes the edges in $\1P$
  have and define $\delta$ to be their minimum.  Let further $\delta'$
  be slightly smaller.  We first close the cycle $\1H$ by adding the
  missing edge using two straight-line segments with slopes $\delta'$
  and $-\delta'$; see Figure~\ref{fig:se-few-bends}(b).  Similarly,
  all remaining edges of $\1G$ are drawn with two straight-line
  segments with slopes appropriately chosen between $\delta'$ and
  $\delta$ and between $-\delta$ and $-\delta'$.  Dealing similarly
  with the remaining edges of $\2G$ yields an {\sc SE} with at most
  one bend per edge on a grid of size $O(n^2) \times O(n^2)$.

  For the case that $\1G$ and $\2G$ are not 4-connected, Kaufmann and
  Wiese~\cite{kw-evpfb-02} showed how they can be augmented to
  4-connected planar graphs by adding new edges and subdividing every
  edge at most once.  Drawing these augmented graphs as described
  above, removing the additional edges and replacing each subdivision
  vertex with a bend yields an {\sc SE} of $\1G$ and $\2G$ with at
  most three bends per edge on an $O(n^2) \times O(n^2)$ grid.
\end{sketch}

The result of Erten and Kobourov was improved by Di Giacomo and
Liotta~\cite{beppe-note-05,beppe-outer-07} to at most two bends per
edge in general and one bend per edge, if $\1G$ and $\2G$ are both
sub-Hamiltonian.  That is, they can be augmented to become Hamiltonian
maintaining planarity, and an augmentation together with a Hamiltonian
cycle is given with the input.  Similar results were obtained by
Kammer~\cite{kammer-06}.  As series-parallel
graphs~\cite{ddlw-bespd-06}, trees and outerplanar
graphs~\cite{clr-egb-87, bk-btg-79} are always sub-Hamiltonian and an
augmentation together with a Hamiltonian cycle can be computed in
linear-time this result yields a linear time algorithm to compute an
{\sc SE} of $\1G$ and $\2G$ with one bend per edge on a grid of size
$O(n^2) \times O(n^2)$ if each of the graphs $\1G$ and $\2G$ is
series-parallel, a tree or outerplanar.

Cappos et al.~\cite{se-arcs-09} show that a path and an outerplanar
graph can be simultaneously embedded in linear time such that edges in
the outerplanar graph are straight-line segments and each edge in the
path consists of a single circular arc.  Alternatively, the path edges
may be piecewise linear with at most two bends per edge.


\section{Colored Simultaneous Embedding}
\label{sec:color-se}

Since {\sc SGE} can be too restrictive, various relaxations have been
considered. The two relaxed versions already mentioned, {\sc SEFE} and
{\sc SE} relax the requirement of straight-line edges, and even the
requirement that common edges are drawn the same way in both
drawings. Another way to relax the constraints of the original {\sc
  SGE} problem is to allow changes in vertex positions in different
graphs.

Until this point we had assumed that multiple input graphs have
labeled vertices and thus the mapping between the vertices of the
graphs is part of the input.  In {\em simultaneous embedding without
  mapping} we are interested in computing plane drawings for each of
the given graphs on the same set of points, where any vertex can be
placed at any of the points in the point set. This setting of the
problem was investigated in the very first paper on {\sc
  SGE}~\cite{se-original-07} and is the source of one of the longest
standing open problems in the area.

\begin{figure}[tb]
  \centering
  \includegraphics{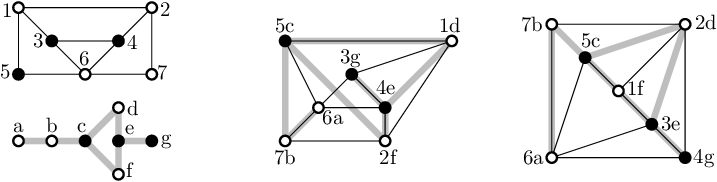}
  \caption{Two 2-colored graphs with two {\sc CSE}s corresponding to
    different mappings. \label{fig:cse}}
\end{figure}

A common generalization of the problems above is {\sc Colored
  Simultaneous Embeddings} ({\sc CSE}), which was introduced by
Brandes et al.~\cite{colored-se-11}, and contains both, the version
with and without mapping. Formally, the problem of {\sc CSE} is
defined as follows. The input is a set of planar graphs $\1G =
(V,\1E), \2G = (V,\2E), . . . , \k G = (V,\k E)$ on the same vertex
set $V$ and a partition of $V$ into $c$ classes, which we refer to as
colors. The goal is to find plane straight-line drawings $\ii D$ of
$\ii G$ using the same $|V|$ points in the plane for all $i = 1,
\ldots k$, where vertices mapped to the same point are required to be
of the same color.  We call such graphs $c$-colored graphs;
\todo{figure added}see Figure~\ref{fig:cse} for an example. Given the
above definition, simultaneous embeddings with and without mapping
correspond to colored simultaneous embeddings with $c = |V|$ and $c =
1$, respectively. Thus, when a set of input graphs allows for a
simultaneous embedding without mapping but does not allow for a
simultaneous embedding with mapping, there must be a threshold for the
number of colors beyond which the graphs can no longer be embedded
simultaneously.

Colored simultaneous embeddings provide a way to obtain
near-simultaneous embeddings, where we place corresponding vertices
nearly, but not necessarily exactly, at the same locations. Relaxing
the constraint on the size of the pointset allows for a way to more
easily obtain near-simultaneous embeddings, where we attempt to place
corresponding vertices relatively close to one another in each
drawing. For example, if each cluster of points in the plane has a
distinct color, then even if a red vertex $v$ placed at a red point
$p\in \1G$ has moved to another red point $q \in \2G$, the movement is
limited to the area covered by the red points.  

Brandes et al.~\cite{colored-se-11} show several positive and negative
results about {\sc CSE}. In particular they show that there exist
universal pointsets of size $n$ for 2-colored paths and spiders as
well as 3-colored paths and caterpillars.  It is also shown that a
2-colored tree (or even a 2-colored outerplanar graph) and any number
of 2-colored paths can be simultaneously embedded.  In the negative
direction, there exist a 2-colored planar graph and pseudo-forest,
three 3-colored outerplanar graphs, four 4-colored pseudo-forests,
three 5-colored pseudo-forests, five 5-colored paths, two 6-colored
biconnected outerplanar graphs, three 6-colored cycles, four 6-colored
paths, and three 9-colored paths that cannot be simultaneously
embedded.

Frati et al.~\cite{se-constrained-09} continue the investigation of
near-{\sc SGE}'s, that is they try to find straight-line drawings of
the input graphs with a small distance between every pair of common
vertices in different drawings.  As a negative result, they present a
pair of graphs such that in every pair of drawings there exists a
common vertex with distance linear in the size of the input.  On the
other hand, they present positive results for a sequence of paths and
a sequence of trees for the case that every two consecutive graphs in
the sequence are similar with respect to a parameter measuring their
similarity.  It can then be shown that the distance of a common vertex
in two consecutive drawings depends linearly on this parameter.

\section{Matched Drawings}
\label{sec:matched-drawings}

Another approach to relax requirements of {\sc SGE} are the so-called
\emph{matched drawings} introduced by Di Giacomo et
al.~\cite{ddkls-mdpg-09}.  A matched drawing of a pair of graphs is a
planar straight-line drawing of each of the graphs such that each
common vertex has the same $y$-coordinate in both drawings (instead of
the same $y$- and $x$-coordinate as required for {\sc SGE});
\todo{figure added}see Figure~\ref{fig:matched-drawings} for an
example.  

\begin{figure}[tb]
  \centering
  \includegraphics{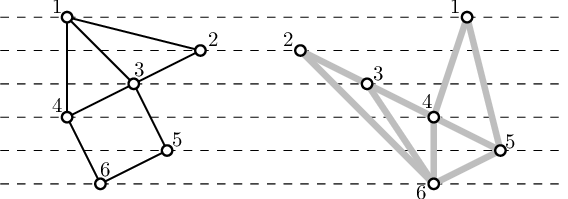}
  \caption{A matched drawing: corresponding vertices have the same
    $y$-coordinate. \label{fig:matched-drawings}}
\end{figure}

Di Giacomo et al.~\cite{ddkls-mdpg-09} give a small counterexample
consisting of two small triconnected planar graphs not admitting a
matched drawing.  Moreover, they give a larger example (620 vertices)
of a biconnected graph and a tree not having a matched drawing.  

Apart from that they also have some results on the positive side.
They show that two trees are always matched drawable.  Moreover, they
observe that any planar graph has a matched drawing with a so-called
\emph{unlabeled level planar (ULP)} graph, that is a graph that admits
a planar straight-line drawing even if the y-coordinate of each vertex
is prespecified such that no two vertices have the same y-coordinate.
A characterization of ULP graphs is given by Fowler and
Kobourov~\cite{fk-culpg-08}.  Di Giacomo et al.~\cite{ddkls-mdpg-09}
moreover show for a graph class containing non-ULP graphs (the
\emph{carousel graphs}) that they admit matched drawings with
arbitrary planar graphs.  A special cases of a carousel graphs is a
graph consisting of a single vertex $v_0$ and a set of disjoint
subgraphs $S_1, \dots, S_k$, each $S_i$ connected to $v_0$ over a
single edge $\{v_0, v_i\}$ such that $S_i$ is either a caterpillar
with $v_i$ on its spine, a radius-2 star with $v_i$ as center or a
cycle.

Grilli et al.~\cite{ghlmw-mdgpgt-09} present further positive results
on matched drawings.  They show how to draw the pairs outerplane plus
wheel, wheel plus wheel, outerplane plus \emph{maximal outerpillar}
(outerplane graph with triangulated inner faces and caterpillar as
weak dual), and outerplane plus \emph{generalized outerpath}
(outerpath where some edges on the outer face may be replaced by some
small subgraphs).  Moreover, they consider matched drawings for graph
triples and give algorithms creating matched drawings of three cycles,
and a caterpillar and two ULP graphs.

\section{Other Simultaneous Representations}
\label{sec:other-simult-repr}

\todo{introduction added} Apart from simultaneously drawing two graphs
sharing some common parts there are other ways to represent graphs
simultaneously.  In this section we describe how a plane graph and its
dual can be represented simultaneously, and what is known about
simultaneous intersection representations of (not necessarily planar)
graphs.

\subsection{A Plane Graph and its Dual}

In a simultaneous drawing of a planar graph and its dual each vertex
in the dual graph is required to be placed inside the corresponding
face of the primal graph.  Moreover, no crossings are allowed except
for crossings between a dual and its corresponding primal edge.
Tutte~\cite{t-hdg-63} first considered this problem and showed that
every triconnected planar graph admits a simultaneous straight-line
drawing with its dual.  However, the resulting drawings may have
exponentially large area.  Erten and Kobourov~\cite{se-dual-05}
provide a linear-time algorithm simultaneously embedding a
triconnected planar graph and its dual on a grid of size $(2n-2)
\times (2n-2)$ such that all edges are drawn as straight-line
segments.  \todo{result by Zhang and He added}Zhang and
He~\cite{zh-sslge-06} improved this result to a grid of size $(n-1)
\times n$.

Brightwell and Scheinerman~\cite{bs-rpg-93} show the existence of a
simultaneous straight-line drawing of a triconnected planar graph and
its weak dual such that the crossings between dual and the
corresponding primal edges are right-angular crossings.  \todo{circle
  packings added} A \emph{circle packing} of a planar graph represents
the vertices as non-crossing circles such that two vertices are
adjacent if and only if their corresponding circles touch.  Given a
circle packing of a planar graph, one obtains a planar straight-line
drawing by placing each vertex at the center of its corresponding
circle.  Mohar~\cite{m-cpmpt-97} shows that every triconnected planar
graph has a simultaneous circle packing with its dual such that in the
corresponding straight-line drawings primal and dual edges have
right-angular crossings.  Argyriou et al.~\cite{abks-grsdg-12} give a
simple example of a graph that is not triconnected not admitting such
a drawing.  On the positive side they give an algorithm that creates
such drawings for the case that the primal graph is outerplanar.

\todo{new part about tessellation representation added}Another way of
simultaneously representing a planar graph and its dual is the
\emph{tessellation representation} introduced by Tamassia and
Tollis~\cite{tt-trpg-89}.  In a tessellation representation, every
edge, every vertex and every face is represented by a (maybe
degenerated) rectangle, a so called \emph{tile}, such that the
interiors of these tiles are pairwise disjoint, that their union forms
a rectangle, and that the incidences in the graph are represented by
side contacts of the tiles in the following way.  Two tiles share a
horizontal line segment if and only if they represent an edge and an
incident face, and two tiles share a vertical line segment if and only
if they represent an edge and an incident vertex.  Tamassia and
Tollis~\cite{tt-trpg-89} in particular showed that every biconnected
planar graph admits a tessellation representation where the tiles
representing vertices and faces are vertical and horizontal line
segments, respectively.  The textbook by Di Battista et
al.~\cite[Sections~4.3 \& 4.4]{dett-gdavg-99} contains a short
description of the algorithm computing tessellation representations
and of the relation to visibility representations.  Moreover,
tessellation representations were also considered on other surfaces
such as the torus~\cite{mr-tvrmt-98}.

\subsection{Intersection Representations}

Jampani and Lubiw~\cite{lubiw-chordal-09} introduce the concept of
simultaneous graph representations for other representations than
drawings.  An intersection representation of a graph assigns a
geometric object to each vertex such that two vertices are adjacent if
and only if their corresponding geometric objects intersect.  Two
graphs sharing a common subgraph are simultaneous intersection graphs
if each of them has an intersection representation such that the
common vertices are represented by the same objects.  Note that every
planar drawing of a graph can be interpreted as intersection
representation, each vertex is represented by the union of its edges.
This shows that deciding {\sc SEFE} as a special case of recognizing
simultaneous intersection graphs.

Other popular intersection representations are the following.  In an
\emph{interval representation} of a graph each vertex is represented
by an interval on the real line.  A graph is \emph{chordal} if each
induced cycle has length three.  Gavril~\cite{g-igstcg-74} shows that
chordal graphs are exactly the intersection graphs of subtrees in a
tree.  This shows that the class of interval graphs is contained in
the class of chordal graphs.  \emph{Permutation graphs} are the
intersection graphs that can be represented by a set of line segments
connecting two parallel lines.  Jampani and
Lubiw~\cite{lubiw-chordal-09} give $O(n^3)$-time algorithms
recognizing simultaneous permutation graphs and simultaneous chordal
graphs.  The algorithm for simultaneous permutation graphs can be
extended to more than two graphs with sunflower intersection.  On the
other hand, it is NP-hard to recognize simultaneous chordal graphs of
this kind (for a constant number $k$ of graphs, the complexity is
still open).

In a follow-up paper Jampani and Lubiw~\cite{lubiw-interval-10} give
an algorithm recognizing simultaneous interval graphs in $O(n^2\log
n)$ time.  As interval graphs can be characterized in terms of
PQ-trees, recognizing simultaneous interval graphs leads to a problem
of finding orders in several PQ-trees simultaneously.  Bl\"asius and
Rutter~\cite{br-spoacep-13} consider this kind of problem in a more
general leading to a $O(n)$-time algorithm recognizing simultaneous
interval graphs.

Related to simultaneous intersection graphs are simultaneous
comparability graphs also introduced by Jampani and
Lubiw~\cite{lubiw-chordal-09}.  A \emph{comparability graph} is a
graph that can be oriented transitively where transitively means that
a directed path implies the existence of a directed edge.  Two graphs
are \emph{simultaneous comparability graphs} if each of them can be
oriented transitively such that common edges are oriented the same in
both.  Jampani and Lubiw give an $O(nm)$-time algorithm recognizing
simultaneous comparability graphs.  It can also be used to recognize
an arbitrary number of comparability graphs with sunflower
intersection.  Comparability graphs are related to intersection graphs
as comparability graphs are exactly the graphs whose complement is a
\emph{function graph}, that is the intersection graph with respect to
continuous functions on an interval~\cite{gru-cgig-83}.

As for the problem {\sc SEFE}, finding simultaneous representations is
related to extending a representation of a subgraph to one of the
whole graph.  For interval graphs Klav\'ik et al.~\cite{kkv-eprig-11}
give a $O(nm)$-time algorithm testing whether a partial interval
representation can be extended.  Bl\"asius and
Rutter~\cite{br-spoacep-13} were able to improve the running time to
$O(m)$ by constructing a second graph such that both graphs are
simultaneous interval graphs if and only if the partial interval
representation can be extended.

\section{Practical Approaches to Dynamic Graph Drawing}
\label{sec:practical-approaches}

The majority of the results reviewed above focused on the
theoretical \todo{section extended, added results on dynamic graph
  drawing} aspects of dynamic graph drawing.  In this section we
review practical approaches to this problem.  As we have seen in the
previous sections, numerous negative results show that in many of the
interesting settings we cannot guarantee simultaneous embeddings.  On
the other hand, several efficient algorithms for different variants of
the problem do exists, but they usually place additional restrictions
on the number of input graphs, or limit the graphs to special
sub-classes of planar graphs.

As discussed in the introduction, the problem is well-motivated in
practice.  Of particular interest are applications to visualization of
dynamic graphs and the related issues of mental map preservation and
good graph readability.  With this in mind we mention several more
practical results here.  First we focus on drawing algorithms that aim
to produce simultaneous embeddings or layouts that are in some sense
close to being a simultaneous embedding.  Afterwards, we briefly
discuss other approaches to dynamic graph drawing.

Erten et al.~\cite{se-schemes-05} adapt force-directed algorithms to
create drawings of a series of graphs sharing subgraphs finding a
tradeoff between nice drawings and similarities of common parts.
Kobourov and Pitta~\cite{se-interactive-04} describe an interactive
system which allows multiple users to interactively modify a pair of
graphs simultaneously using a multi-user, touch-sensitive input
device.  While those two approaches focus on straight-line drawings
(corresponding to {\sc SGE}), the GraphSET system by
Estrella-Balderrama et al.~\cite{graphset-10} also allows edges to
have bends.  GraphSET is a tool helping the user to investigate the
theoretical problems {\sc SGE} and {\sc SEFE} and it contains
implementations of several testing and drawing algorithms.  Chimani et
al.~\cite{juenger-crossings-08} create simultaneous drawings of graphs
by drawing the union of the graphs.  Their objective is to minimize
the number of crossings in the drawing, where crossings between edges
of different graphs do not count, yielding a simultaneous embedding if
and only if the number of crossings is zero.

Misue et al.~\cite{mels-lamm-95} initiated the study of drawing
dynamically changing graphs and first proposed several models to
capture the notion of preserving the user's mental map.  In particular
they suggested preservation of orthogonal orderings, proximity
relations, or the topology as a formalization.  Bridgeman and
Tamassia~\cite{bt-dmiog-98} describe and evaluate difference metrics
that are specialized to orthogonal graph drawings.  Purchase et
al.~\cite{phg-himm-07} provide empirical evidence that preserving the
user's mental map indeed assists in comprehending the evolving graph.
Purchase and Samra~\cite{ps-ear-08} argue that for minimizing the node
movement, finding a trade-off is worse than either keeping the exact
node positions or just layouting the next graph from scratch for
memorizing tasks.  In a recent study, Archambault and
Purchase~\cite{ap-mmphuodg-13} observed positive effects of mental map
preservation for localization tasks, both in terms of speed and
accuracy.  Sallaberry et al.~\cite{smm-cvnldg-13} consider mental map
preservation for large graphs and argue that restricting node
movements to small distances is not sufficient for this case.  They
propose to cluster nodes into groups that perform the same movement in
order to increase the stability of the drawing.

Bridgeman et al.~\cite{bfgtv-igaio-97} present InteractiveGiotto, a
bend-minimization algorithm for orthogonal drawings that is designed
for dynamic and interactive scenarios.  Their algorithm supports
arbitrary graph changes and preserves the embedding, all edge
crossings, and the bends of edges.  

Brandes and Wagner~\cite{bw-bpdgl-97} suggest a Bayesian framework for
dynamic graph drawing that can in principal we applied to all layout
styles and allows to choose a trade-off between quality and stability.
Diel and G\"org~\cite{dg-gtac-02} introduce foresighted layouts, where
the basic idea is to layout the union of the graph over all time steps
and to combine vertices and edges whose life times are disjoint, in
order to reduce the size of the drawing.  This automatically
guarantees a high stability of the layout, but possibly incurs a
negative impact on the quality of individual drawings.  G\"org et
al.~\cite{gbpd-dgdso-04} enhance this method by an additional step
that improves the quality of the individual layouts while keeping them
close to the foresighted layout.

North and Woodhull~\cite{nw-ohgd-02} propose a heuristic for online
hierarchical graph drawing by dynamizing the classical Sugiyama
algorithm~\cite{stt-mvuhss-81}.  Collberg et al.~\cite{cknpw-sgbve-03}
describe a system for visualizing the evolution of software
based on force-directed methods applied to so-called time-sliced
graphs.  A time-sliced graph consists of disjoint copies of the graph
at each point in time together with time-slice edges, which connect
corresponding vertices from different points in time.  The algorithm
attempts to place vertices that are connected by a time-slice edge in
roughly the same position.  Frishman and Tal~\cite{ft-odgd-08}
describe an algorithm for online dynamic graph drawing that can be
implemented to run on a GPU.

\section{Morphing Planar Drawings}
\label{sec:morphing}

\todo{new section on morphing}The main motivation for simultaneously
embedding different (but related) graphs is to preserve the mental map
between the unchanged parts by drawing them the same.  As opposed to
this, \emph{morphing} tries to match different drawings of the same
graph.  More precisely, let $\Gamma_1$ and $\Gamma_2$ be two drawings
of the same graph $G$, a morph between them is a motion of the
vertices along trajectories starting at the vertex positions in
$\Gamma_1$ and ending at their positions in $\Gamma_2$.

The simplest possible morph between two drawings $\Gamma_1$ and
$\Gamma_2$ is the \emph{linear morph} where each vertex moves at
constant speed along a line segment from its origin in $\Gamma_1$ to
its destination in $\Gamma_2$.  However, the intermediate drawings of
linear morphs may be pretty bad, in fact, it may even happen that the
whole graph collapses to a single point.  To resolve this problem
Cairns~\cite{c-dprc-44} introduced the notion of morphing planar
graphs, requiring that every intermediate drawing is also planar.  He
showed that two planar drawings of a triangulated plane graph with an
equally drawn outer face can be morphed into each other in a planar
way using a sequence of linear morphs.  However, this sequence of
linear morphs has exponential size.

Thomassen~\cite{t-dpg-83} extends this to drawings of general (not
necessarily triangulated) planar graphs with an equally drawn outer
face and convex faces by augmenting the drawings to \emph{compatible
  triangulations}, that is one must be able to add all new vertices
and edges to both given drawings without violating the planarity or
straight-line requirement.  Compatible triangulations were further
investigated by Aronov et al.~\cite{ass-ctsp-93} who show that two
drawings admit compatible triangulations with only $O(n^2)$ new
vertices.  They moreover show that $\Theta(n^2)$ new vertices are
sometimes necessary.  This result has the following general
implication.  If there exist planar morphs between drawings of
triangulated graphs using $O(f(n))$ linear morphing steps, then there
are morphs between drawings of arbitrary plane graphs using
$O(f(n^2))$ steps.

To be able to morph with a polynomial number ($O(n^6)$) of linear
steps Lubiw and Petrick~\cite{lp-mpgdbe-08} relaxed the straight-line
requirement and showed how to morph between two planar drawings when
edges are allowed to be bent during the morph.  However, this result
can also be achieved without this relaxation.  Alamdari et
al.~\cite{aac-mpgdpns-13} show that for every pair of planar
straight-line drawings of a triangulated graph with an equally drawn
outer face there exists a planar morph consisting of a sequence of
$O(n^2)$ linear morphs.  This is the first result showing that a
polynomial number of morphing steps is sufficient.  Using the results
on compatible triangulations mentioned above~\cite{ass-ctsp-93} this
yields a morph with $O(n^4)$ linear steps for general plane graphs.

Floater and Gotsman~\cite{fg-hmti-99} introduced a completely
different approach to planar morphing of triangulations.  They make
use of the fact that in a planar drawing the position of each vertex
is a convex combination of the neighboring vertices and that
conversely fixing the coefficients of the convex combinations and
fixing the outer face yields a planar drawing.  This was shown by
Floater~\cite{f-psast-97} extending the results by
Tutte~\cite{t-crg-60,t-hdg-63}.  Floater and Gotsman~\cite{fg-hmti-99}
create a morph between two planar drawings by transforming the
coefficients of the corresponding convex combinations into one
another, yielding a sequence of coefficients and thus a sequence of
planar drawings.  Surazhsky and Gotsman~\cite{sg-cmcpt-01,sg-imct-03}
improve this approach further to obtain aesthetically more appealing
morphs.

The approach based on convex combinations has the disadvantage that
the trajectories are not explicitly computed and that it is not clear
how many linear morphing steps are necessary to obtain a planar and
smooth morph.  Despite its theoretical shortcomings, in practice this
algorithm leads to nice morphs, as shown by Erten et
al.~\cite{ekp-ifmpg-04,ekp-mpg-04}, who combine this approach with
rigid motion (translation, rotation, scaling and shearing) and the
triangulation algorithm by Aronov et al.~\cite{ass-ctsp-93}.
Moreover, they are able to morph edges with bends to straight-line
edges and vice versa.

Biedl et al.~\cite{bls-mpgped-06} consider a related problem of
morphing so called \emph{parallel} straight-line drawings, that is
straight-line drawings such that for every edge $e$, the slope of $e$
is the same in both drawings.  Moreover, the edge slopes have to be
preserved throughout the whole morph.  They show that for orthogonal
drawings (without bends) such a morph always exists.  On the other
hand, testing for the existence of such a morph becomes NP-hard if the
edges are allowed to have three or more slopes.  Lubiw et
al.~\cite{lps-mopgd-06} investigate morphs between general orthogonal
drawings of planar graphs where edges may have bends.  They show that
for every pair of drawings there is a morph preserving planarity and
orthogonality consisting of polynomial many steps, where each step is
either a movement of vertices or a ``twist'' around a vertex that
introduces new bends at the edges incident to this vertex.

Of course, problems similar to planar morphing can be considered for
non-planar graphs.  Examples are the results by Friedrich and
Eades~\cite{fe-gdm-02} and Friedrich and Houle~\cite{cm-gdm-02}.

\section{Open Problems}
\label{sec:open-problems}

There are many interesting problems, some of which have been open for
a decade and have resisted efforts to address them. Here we list
several of the current open problems.
\begin{enumerate}
\item Given two arbitrary planar graphs $\1G=(\1V,\1E)$ and
  $\2G=(\2V,\2E)$ with the same number of vertices, $|\1V|=|\2V|$,
  does there always exist a mapping from the vertex set of the first
  graph onto the vertex set of the second graph $\1V\rightarrow \2V$
  such that the two graphs have a {\sc SGE}? That is, do pairs of
  planar graphs always have an {\sc SGE} without mapping?
\item Given two graphs of max-degree 2, $\1G=(\1V,\1E)$ and
  $\2G=(\2V,\2E)$ with the same number of vertices, an {\sc SGE} with
  mapping does always exist.  Unlike most other results where the pair
  of graphs has an {\sc SGE} the area of the necessary grid is not
  bounded. Is it possible to guarantee polynomial integer grid for the
  simultaneous embedding?
\item What is the complexity of {\sc SGE} for two graphs with fixed
  planar embeddings?
\item Is it possible to decide {\sc SGE} for restricted cases, for
  example if the common graph is highly connected?
\item What is the complexity of the decision problem {\sc SEFE} for
  two graphs?
\item Are there interesting parameters for which {\sc SEFE} or {\sc
    SGE} are FPT?  For example, tree-distance of~$G$?  What about
  maximum degree~$\Delta$?
\item What is the complexity of {\sc SEFE} for more than two graphs
  with sunflower intersection?
\item What is the complexity of {\sc SEFE} for four graphs, each with
  a fixed planar embedding?
\item What is the complexity of the optimization version of {\sc SEFE}
  where one asks for drawings such that as many common edges as
  possible are drawn the same?
\item Let $\1G$ and $\2G$ be two planar graphs with given
  combinatorial embeddings inducing the same embedding on their
  intersection $G$, that is a {\sc SEFE} is given with the input.
  What is the complexity of minimizing the number of crossings in a
  corresponding drawing?
\item Let $\1G$ and $\2G$ be two planar graphs with given
  combinatorial embeddings inducing the same embedding on their
  intersection $G$, that is a {\sc SEFE} is given with the input.
  Do $\1G$ and $\2G$ admit drawings with few bends on a small grid
  respecting the given {\sc SEFE}?
\item There are many open problems in the {\sc CSE} setting. A
  particularly interesting one concerns pairs of trees. It is known
  that two $n$-vertex trees without mapping (1-colored) have a
  simultaneous geometric embedding (any set of $n$ points in convex
  position suffices). It is also known that at the other extreme when
  the mapping is given ($n$-colored) such geometric embedding may not
  exist. However, the problem is open for any number of colors $c\in
  \{2, \ldots, n-1\}$.
\item Similarly to the previous problem, the status of the tree-path {\sc CSE} problem is open for any number of colors $c\in \{3, \ldots, n-1\}$.
\end{enumerate}

\bibliographystyle{abbrv}
{
\bibliography{se}
}

\end{document}